\theoremstyle{plain}
\newtheorem*{rep@theorem}{\rep@title}
\newcommand{\newreptheorem}[2]{%
\newenvironment{rep#1}[1]{%
 \def\rep@title{#2 \ref{##1}}%
 \begin{rep@theorem}}%
 {\end{rep@theorem}}}
\renewcommand{\mid}{\;\middle\vert\; }
\newcommand{\ZZ}{\ensuremath{\mathbb{Z}}\xspace}
\newcommand{\NN}{\ensuremath{\mathbb{N}}\xspace}
\newcommand{\QQ}{\ensuremath{\mathbb{Q}}\xspace}
\newcommand{\alphabet}{{\Sigma}}
\newcommand{\fullshift}{\ensuremath{\alphabet^{\ZZ^d}}\xspace}
\newcommand{\norm}[1]{{\left\|#1\right\|}}
\newcommand{\pizu}{\ensuremath{\Pi^0_1}\xspace}
\newcommand{\sizu}{\ensuremath{\Sigma^0_1}\xspace}
\newcommand{\sizd}{\ensuremath{\Sigma^0_2}\xspace}
\newcommand{\sizn}{\ensuremath{\Sigma^0_n}\xspace}
\newcommand{\pizn}{\ensuremath{\Pi^0_n}\xspace}
\newcommand{\subshift}[1]{{{X}_{#1}}}
\newcommand{\forbid}{{\mathcal F}}
\newcommand{\periodset}[1]{{\ensuremath{P_{#1}}}}
\newcommand{\F}{\mathcal{F}}
\newcommand{\langu}[1]{{\ensuremath{\mathcal{L}\left(#1\right)}}}
\newcommand\ie{{\em i.e.}\xspace}
\DeclareMathOperator{\lcm}{lcm}
\title{Aperiodic points in $\ZZ^2$-subshifts}
\author{Anael Grandjean}{Laboratoire d'Algorithmique, Complexité et Logique\\
  Université Paris-Est Créteil, France}{anael.grandjean@u-pec.fr}{}{Sponsored by grant TARMAC ANR 12 BS02 007 01.}
\author{Benjamin Hellouin de Menibus}{Laboratoire de Recherche en Informatique,\\ Université Paris-Sud, CNRS, CentraleSupélec, Université Paris-Saclay, France}{hellouin@lri.fr}{ https://orcid.org/0000-0001-5194-929X}{}
\author{Pascal Vanier}{Laboratoire d'Algorithmique, Complexité et Logique\\
  Université Paris-Est Créteil, France}{pascal.vanier@lacl.fr}{}{Sponsored by grant TARMAC ANR 12 BS02 007 01.}
\authorrunning{A. Grandjean, B. Hellouin de Menibus and P. Vanier}
\subjclass{\ccsdesc[500]{Theory of computation~Models of computation}}
\keywords{Subshifts of finite type, Wang tiles, periodicity,
  aperiodicity, computability, tilings}
\begin{document}

\maketitle
\begin{abstract}
  We consider the structure of aperiodic points in $\mathbb Z^2$-subshifts, and in particular the positions at which they fail to be periodic. We prove that if a $\mathbb Z^2$-subshift contains points whose smallest period is arbitrarily large, then it contains an aperiodic point. This lets us characterise the computational difficulty of deciding if an $\mathbb Z^2$-subshift of finite type contains an aperiodic point. Another consequence is that $\mathbb Z^2$-subshifts with no aperiodic point have a very strong dynamical structure and are almost topologically conjugate to some $\mathbb Z$-subshift. Finally, we use this result to characterize sets of possible slopes of periodicity for $\mathbb Z^3$-subshifts of finite type.

\end{abstract}

A subshift on $\ZZ^d$ is a set of colorings of $\ZZ^d$ by a finite set of colors avoiding some family of forbidden patterns. When this family is finite, the subshift is called a \emph{subshift of finite type} (SFT). In dimension 2, SFTs are equivalent to sets of tilings by Wang tiles: Wang tiles are unit squares with colored borders that cannot be rotated and may be placed next to each other only if the borders match. 

Wang tiles were introduced by Wang in order to study the decidability of some fragments of logic \cite{WangI,WangII}. He thus introduced the Domino Problem: given a set of Wang tiles, do they tile the plane? (in other words, is the corresponding subshift nonempty?) Wang first conjectured that whenever a tileset tiles the plane, it can do so in a periodic manner, which would have implied the decidability of the Domino Problem.

In dimension $1$ the problem is decidable. A $\ZZ$-SFT corresponds to the set of biinfinite walks on some automaton and it tiles the line if and only if the automaton contains a cycle. Such a cycle provides a periodic point of the SFT, so non-empty \ZZ-SFTs always contain a periodic point.
The situation is dramatically different in higher dimension. Berger \cite{BergerPhD} proved that there exists tilesets in dimension 2 that tile the plane only aperiodically, and that the Domino Problem was therefore undecidable.

Thus, from the start, periodicity and aperiodicity have been at the heart of the study of Wang tiles and SFTs, and the main tool in understanding their structural properties and the answer to various decision problems. To give a few examples:
\begin{itemize}
   \item The presence of a dense set of periodic points is related to the decidability of the problem of deciding whether a given pattern appears in some point of an SFT \cite{ks}.
\item The finite subshifts on $\ZZ^d$ are exactly the subshifts containing only
  periodic configurations with $d$ non-colinear vectors of periodicity \cite[Theorem 3.8]{structural}. These configurations can be seen as finite configurations.
  This result has recently been extended to subshifts on groups \cite{SaloMey}. 
\item Countable SFTs always contain a finite configuration
  and if they are not finite, then they contain a configuration
  with exactly one vector of periodicity \cite[Theorem 3.11]{structural}.
\item A subshift always contains a quasiperiodic
  configuration \cite{Birkhoff,Du}, a configuration in which every finite pattern appears in any window of sufficiently large size depending only on the size of
  the pattern.
\end{itemize}

In this article we study the structure of aperiodic points in $\ZZ^2$-SFTs, and in particular the repartition of the coordinates where it avoids to be periodic. Our main result is:
\begin{theorem}\label{thm:main}
  There exists a computable function $f$ that satisfies the following. Assume $X$ is a $\ZZ^2$-subshift such that for any finite set of periods $\periodset{}$, $X$ contains a configuration that avoids all periods of $\periodset{}$. Then $X$ contains an aperiodic point that avoids every period $p$ at distance at most $f(\norm{p})$ from $0$.
\end{theorem}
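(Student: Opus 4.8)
The plan is to obtain the point as a limit of nested finite patterns, treating the periods one at a time in order of increasing norm. Call a pattern $u$ carried by a ball $\boule r$ \emph{good} if for every $N$ there is a configuration of $X$ containing $u$ at the origin and not invariant under any translation of norm at most $N$. By the hypothesis the empty pattern is good, and a short compactness argument shows that every good pattern on $\boule r$ extends to a good pattern on $\boule{r+1}$: take configurations witnessing goodness for larger and larger $N$, extract a converging subsequence, and restrict the limit to $\boule{r+1}$. If, moreover, one can always extend a good pattern to a good pattern that acquires a \emph{$p$-defect} (a pair of positions $i,i+p$ carrying different letters) while enlarging the ball by an amount bounded by a computable function of its radius and of $\norm p$, then iterating along an enumeration of $\ZZ^2\setminus\{0\}$ produces nested good patterns whose union is a configuration of $X$ (each pattern is good, hence occurs in $X$, and $X$ is closed) having a $p$-defect near the origin for every $p$; this configuration is aperiodic, and tracking the successive radii --- there are only $O(m)$ periods of norm about $m$, so even a multiplicative blow-up per period leaves the bound computable --- yields a computable non-decreasing $f$ such that the defect of $p$ sits inside $\boule{f(\norm p)}$.

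The whole difficulty is thus the extension step: \textbf{given a good pattern $u$ on $\boule r$ with $r\ge\norm p$ and no $p$-defect (after first enlarging $u$ we may assume $r\ge\norm p$; if $u$ already has a $p$-defect there is nothing to do), find a good pattern $u'\supseteq u$ on $\boule{h(r,\norm p)}$ with a $p$-defect, for a computable $h$.} Were this to fail, every configuration of $X$ that is good at every scale (all of whose restrictions to balls are good) and contains $u$ would be $p$-periodic, and the goal would be to contradict goodness of $u$. The obstruction is that ``avoiding $p$'' is an open but not closed condition, so a direct compactness argument is unavailable: a limit of configurations of $X$ whose $p$-defects drift to infinity may well be $p$-periodic. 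This is exactly where two-dimensionality must enter. A configuration of $X$ that contains the fixed pattern $u$ at the origin and is $p$-periodic far out is, near the origin, also constrained by the many other short periods it must avoid --- in particular the periods $p+e$ for short $e$ not collinear with $p$ --- and the plan is to use the defects forced by these neighbouring periods to ``pull'' a $p$-defect back towards the origin (translating along $p$ moves a $p$-defect by $\norm p$ at a time but never in the orthogonal direction, which is why the neighbours are needed), obtaining a bound on its distance that depends only on $r$ and $\norm p$. Making this pull-back precise and extracting an explicit computable bound from it is what I expect to be the main technical work.

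As a warm-up and a sanity check on the role of the hypothesis: for a single period $p$ there is nothing to prove, since translating any non-$p$-periodic configuration of $X$ so that one of its $p$-defects lands at the origin gives a configuration of $X$ with a $p$-defect at distance $0$; the content of the theorem lies entirely in doing this for all periods at once, inside one configuration, with a uniform and computable bound. I would also verify the limit step carefully: for a fixed $p$ the $p$-defects of the approximating configurations all lie in the fixed ball $\boule{f(\norm p)}$, so after passing to a subsequence one may assume the $p$-defect occurs at a fixed position, and diagonalising over all periods produces a single configuration of $X$ that is aperiodic and realises the modulus $f$.
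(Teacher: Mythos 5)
Your framework (nested ``good'' patterns, compactness for the extension-by-one-cell step and for the final limit) is sound as far as it goes, and your diagnosis of the obstruction --- that avoiding a period is an open, non-closed condition, so defects can drift to infinity along a convergent sequence --- is exactly right. But the step you yourself flag as ``the main technical work'' is the entire content of the theorem, and you have not supplied it. The paper's solution is a gathering lemma (Lemma~\ref{lem:ball}): if a single configuration avoids a set of \emph{pairwise non-colinear} periods, then all of the avoidances can be found inside one ball of radius $\sum_{p}\norm{p}$. The proof slides a ball known to contain avoidances of $p_0,\dots,p_{n-1}$ along multiples of $p_n$, and a ball containing avoidances of $p_1,\dots,p_n$ along multiples of $p_0$; each slide either uncovers the missing avoidance or shows the translated contents are identical, and non-colinearity forces the two trajectories to pass within $\norm{p_0}/2+\norm{p_n}/2$ of each other, at which point the union of the two balls contains everything. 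Your ``pull-back via the neighbouring periods $p+e$'' gestures in this direction but is not the argument, and note that a reduction to pairwise non-colinear periods (replacing colinear families by their lcm, as the paper does with $\periodset{}'$) is a necessary preprocessing step: two colinear periods cannot be brought together by sliding along each other's direction.

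There is also a structural problem with your formulation of the extension step, beyond its being unproved. You anchor the growing pattern $u$ at the origin and ask that each new period acquire a defect within a computably bounded enlargement of $u$. This is a \emph{bottom-up} claim (add one period at a time to a fixed anchor) that does not follow from the paper's lemmas and may well be false: in a configuration witnessing goodness of $u$, the defect of the new period $p_{k+1}$ can a priori sit arbitrarily far from the origin, and the gathering argument then produces a ball containing all defects only near the \emph{meeting point} of the two sliding trajectories, which recedes with the distance of that far-away defect. The paper avoids this entirely by working top-down: it fixes a configuration $x_n$ avoiding all of $B(0,n)'$, gathers all of its avoidances into one ball somewhere (Lemma~\ref{lem:ball}), organises them concentrically around a centre $z_n$ whose location relative to the origin is never controlled (Lemmas~\ref{lem:ballprime} and~\ref{lem:concentric}), and only then re-centres by applying $\sigma_{z_n}$ before taking a limit point. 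No pattern is ever held fixed at the origin. To repair your proof you would either have to prove your anchored extension step (which I do not believe is a consequence of the theorem's hypotheses) or abandon the anchor and adopt the gather-then-recentre order of quantifiers.
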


This means that aperiodicity can be ``organised'' in concentric balls around a common center, in such a way that a proof of aperiodicity for any vector may be found near this center. As a consequence, when a subshift does not contain any aperiodic point, it must have a finite number of directions of periodicity:

\begin{corollary}\label{cor:main}
For any subshift $X$ with no aperiodic point, there is a finite set of periods $\periodset{}$ such that any configuration of $X$ is periodic for some period $p\in\periodset{}$.
\end{corollary}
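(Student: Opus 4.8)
The plan is to read off Corollary~\ref{cor:main} as the contrapositive of Theorem~\ref{thm:main}: the hypothesis of the theorem is, word for word, a negation of the conclusion we want, and the conclusion of the theorem contains, as a special case, a negation of our hypothesis (we need only the bare existence of an aperiodic point, not the quantitative information carried by $f$).

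First I would line up the definitions. For a configuration $x \in X$ and a nonzero $p \in \ZZ^2$, say that $x$ \emph{avoids} the period $p$ when $\sigma^p(x) \neq x$, and that $x$ \emph{avoids} a finite set of periods $\periodset{}$ when it avoids every $p \in \periodset{}$. With this reading, ``$X$ contains a configuration avoiding all periods of $\periodset{}$'' is exactly the negation of ``every configuration of $X$ is periodic for some $p \in \periodset{}$''. Hence Corollary~\ref{cor:main} is equivalent to the assertion: if $X$ has no aperiodic point, then there is a finite set of periods $\periodset{}$ avoided by no configuration of $X$.

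Next I would run the contradiction. Suppose $X$ has no aperiodic point, yet for every finite set of periods $\periodset{}$ some configuration of $X$ avoids $\periodset{}$. The latter is precisely the hypothesis of Theorem~\ref{thm:main}, so the theorem produces an aperiodic point of $X$ (avoiding, in fact, every period $p$ within distance $f(\norm{p})$ of the origin, but we discard this refinement). This contradicts the assumption that $X$ has no aperiodic point. Therefore some finite set of periods $\periodset{}$ is avoided by no configuration of $X$, which is exactly the statement. The finiteness of $\periodset{}$ is free: it is produced by negating the quantifier ``for any finite set of periods'' in the hypothesis of Theorem~\ref{thm:main}.

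I do not expect a genuine obstacle here, since all the work is already contained in Theorem~\ref{thm:main}; the only thing to be careful about is the bookkeeping, namely keeping straight the direction of ``avoids a period'' versus ``is periodic for a period'' and commuting the resulting existential over finite sets $\periodset{}$ past the negation. (Here $X$ is taken to range over $\ZZ^2$-subshifts, matching the scope of Theorem~\ref{thm:main}; the one-dimensional analogue is classical and much easier.)
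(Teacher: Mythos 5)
Your proposal is correct and matches the paper exactly: the paper's entire proof of Corollary~\ref{cor:main} is the one-line remark that it is ``a simple reciprocal to Theorem~\ref{thm:main}'', i.e.\ the contrapositive you spell out. Your bookkeeping of ``avoids $p$'' versus ``is periodic for $p$'' and of the quantifier over finite sets $\periodset{}$ is the right (and only) content of the argument.
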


This will lead to a further characterization of subshifts containing no aperiodic points in Section~\ref{S:charnoaper}.

These results have a variety of consequences. Gurevich and Koryakov~\cite{Gure} proved that for $d\geq 2$ it is undecidable to know whether an SFT contains a periodic, resp. aperiodic configuration. While it is easy to see that checking whether an SFT contains a periodic configuration is a recursively enumerable problem ($\sizu$ in the arithmetical hierarchy), it remained an open problem whether deciding if an SFT contains an aperiodic configuration was even in the arithmetical hierarchy. One of the consequences of Theorem~\ref{thm:main} is that it is $\pizu$.

Periodicity is also a central topic of symbolic dynamics since sets of periods and directions of periodicity constitute conjugacy invariants. For example, we prove that a $\ZZ^2$-subshift with no aperiodic point has a very strong dynamical structure and is essentially equivalent (almost topologically conjugate) to some $\ZZ$-subshift, and this is true for SFTs as well. In particular, various classical decision problems are decidable for this class, its topological entropy is 0 and its entropy dimension is at most 1.
Sets of periods have also been studied and characterized through computability and complexity theory \cite{persft}. \cite{MoutotVanier} recently proved that any \sizd set of $\left(\QQ\cup\{\infty\}\right)^2$ can be realized as a set of slopes of a $\ZZ^3$-SFT. Another consequence of Theorem~\ref{thm:main} is that this becomes a characterization.

The article is organized as follows: Section~\ref{S:defs} recalls some definitions and notations, Section~\ref{S:mainthm} is devoted to the proof of Theorem~\ref{thm:main}, Section~\ref{S:consequences} is devoted to its consequences and Section~\ref{S:counter} shows a counter example for $\ZZ^d$
subshifts when $d\geq 3$.

\section{Definitions}\label{S:defs}
Throughout the paper, we consider the distance on $\ZZ^d$ defined by the uniform norm $d(i,j) = ||i-j||_\infty$.

\subsection{Subshifts}
We provide here standard definitions about subshifts, which may be found in greater detail in \cite{LindMarcus}.

The $d$-dimensional full shift is the set $\Sigma^{\ZZ^d}$ where $\Sigma$ is a finite alphabet whose elements are called \emph{letters} or \emph{symbols}. Each element of $\Sigma^{\ZZ^d}$ is called a \emph{configuration}  or \emph{point}. A configuration may be seen as a coloring of $\ZZ^d$ with the letters of $\Sigma$.
For $v\in\ZZ^d$, the \emph{shift function} $\sigma_v:\Sigma^{\ZZ^d}\to\Sigma^{\ZZ^d}$ is defined by $\sigma_v(x)_z = x_{z+v}$. The \emph{full shift} equipped with the distance $d(x,y)=2^{-\min\left\{ \norm{v}\mid v\in\ZZ^d, x_v\neq y_v \right\}}$ forms a compact metric space on which the shift functions act as homeomorphisms.
A closed shift invariant subset $X$ of $\Sigma^{\ZZ^d}$ is called a \emph{subshift} or \emph{shift}.

A \emph{pattern} of shape $\Gamma$, where $\Gamma$ is a finite subset of $\ZZ^d$, is an element of $\Sigma^\Gamma$ or alternatively a function $p:\Gamma\to \Sigma$. A configuration $x$ \emph{avoids a pattern} $\gamma$ of shape $\Gamma$ if $\forall z\in\ZZ^d$, $\sigma_v(x)_{|\Gamma}\neq \gamma$ and \emph{contains} $\gamma$ if it does not avoid it.

For a family of forbidden patterns $\F$, denote $X_{\F}$ the set of configurations that avoid $\F$. Then $X_\F$ is a subshift, and every subshift can be defined in this way. When a subshift can be defined this way by a finite family, it is called a \emph{subshift of finite type}. When a subshift can be defined by a recursively enumerable family of forbidden patterns, it is called an \emph{effectively closed subshift}.

If $X$ is a subshift, we denote by $\langu{X}$ its \emph{language}, \ie the set of patterns that appear somewhere in one of its points. 

\begin{definition}[Periodicity]
    A configuration $x$ is \emph{periodic} of period $v$ if there exists $v\in\ZZ^d\setminus\{(0,0)\}$ such that $\forall z\in\ZZ^d, x_z = x_{z+v}$. More precisely, a configuration is \emph{$k$-periodic} if it has exactly $k$ linearly independent periods. If a configuration has no period, then it is said to be \emph{aperiodic}. A subshift is
    \emph{aperiodic} if all its points are aperiodic.
    
\end{definition}

Denote by $B(z,n)$ the ball of radius $n$ centered in $z \in\ZZ^d$.

Let $x\in\fullshift$ and $p \in \ZZ^2$. If there exists $z\in\ZZ^2$ such that $x_{z} \neq x_{z+p}$, we say that $x$ \emph{avoids} period $p$. The pair $(z,z+p)$ is called an \emph{avoidance} of period $p$ in configuration $x$. We say that a configuration \emph{avoids a set of periods} $\periodset{}$ if it avoids every period
in $\periodset{}$.

Let $\periodset{}$ be a set of periods. We denote $\periodset{}'$ the set obtained from $\periodset{}$ by replacing each period $p$ by the least commun multiple of all periods of $\periodset{}$
that are colinear to $p$. More formally : $\periodset{}' = \left\{ \lcm ( q \mid \text{$q \in \periodset{}$ and $q$ and $p$ are colinear}) \mid p \in \periodset{} \right\} $.
Observe that $\periodset{}'$ is a set of pairwise non-colinear periods.

Except in the last section, the subshifts we will be considering will
implicitely be $\ZZ^2$-subshifts.
\subsection{Arithmetical hierarchy}
We give some basic definitions used in computability theory and in particular about the arithmetical hierarchy. More details may be found in \cite{Rogers}.

Usually the arithmetical hierarchy is seen as a classification of sets according to their logical characterization.
For our purpose we use an equivalent definition in terms of computability classes and Turing machines with oracles:

  \begin{itemize}
  \item $\Delta^0_0 = \Sigma_0^0 = \Pi_0^0$ is the class of recursive (or computable) problems.
    \item \sizn is the class of recursively enumerable (RE) problems with an oracle $\Pi_{n-1}^0$.
    \item \pizn the complementary of \sizn, or the class of co-recursively enumerable (coRE) problems with an oracle $\Sigma_{n-1}^0$.
    \item $\Delta^0_n = \sizn \cap \pizn$ is the class of recursive (R) problems with an oracle $\Pi_{n-1}^0$.
  \end{itemize}

In particular, \sizu is the class of recursively enumerable problems and \pizu is the class of
co-recursively enumerable problems.

\section{Main result}\label{S:mainthm}
This whole section is dedicated to the proof of the Theorem~\ref{thm:main} and Corollary~\ref{cor:main}.
Given a subshift that contains an aperiodic point, we prove that it contains some aperiodic point where all period avoidances are organised in concentric balls around a common center, in such a way that each period $p$ is in a ball whose radius only depends on $\norm{p}$. This result is used in a compactness argument to prove that, if a subshift contains configurations whose smallest period is arbitrarily large, then it contains an aperiodic point.

Actually, our algorithm can only gather avoidances in a small ball if all the periods are non-colinear. Fortunately we can easily build a set $\periodset{}'$.

\begin{lemma}
Let $\periodset{}$ be a set of periods. Any configuration avoiding $\periodset{}'$ also avoids $\periodset{}$.
\end{lemma}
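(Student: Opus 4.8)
The plan is to show directly that a period in $\periodset{}'$ is an integer multiple of each colinear period it was built from, and that avoiding a multiple of a period forces avoiding the period itself. Concretely, fix a configuration $x$ that avoids $\periodset{}'$ and let $p \in \periodset{}$ be arbitrary; I must show $x$ avoids $p$. By construction of $\periodset{}'$, there is a period $q \in \periodset{}'$ of the form $q = \lcm(q_1,\dots,q_k)$ where $q_1,\dots,q_k$ are exactly the periods of $\periodset{}$ colinear to $p$, and $p$ is among the $q_i$. Since $x$ avoids $\periodset{}'$, $x$ avoids $q$.

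The key step is the observation that if $p$ and $q$ are colinear and $q$ is an integer multiple of $p$, say $q = m p$ with $m \in \ZZ \setminus \{0\}$, then any configuration that is periodic of period $p$ is also periodic of period $q$: indeed $x_z = x_{z+p} = x_{z+2p} = \dots = x_{z+mp} = x_{z+q}$ for all $z$. Taking the contrapositive, if $x$ avoids $q$ then $x$ avoids $p$. I then need the elementary number-theoretic fact that for pairwise colinear integer vectors $q_1, \dots, q_k$ (all scalar multiples of a common primitive vector), the vector $q = \lcm(q_1,\dots,q_k)$ is a well-defined integer vector that is an integer multiple of each $q_i$; this is just the one-dimensional statement applied to the integer coefficients once one writes each $q_i$ as a multiple of the primitive direction vector. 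Applying this with $q_i = p$ gives that $q$ is an integer multiple of $p$, so by the previous paragraph $x$ avoids $p$.

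The only mild subtlety, and hence the main thing to be careful about, is making the definition of $\lcm$ on colinear vectors precise: one should fix, for each colinear class, a primitive vector $u$ pointing in that direction, write each period as $a_i u$ with $a_i \in \ZZ \setminus\{0\}$, and set $\lcm(q_1,\dots,q_k) = \lcm(|a_1|,\dots,|a_k|)\, u$. With this convention the divisibility claim is immediate and the proof above goes through; I would state this convention explicitly (or refer back to the definition of $\periodset{}'$ in the text) and then the argument is just the two short implications above. Since $p$ was an arbitrary element of $\periodset{}$, the configuration $x$ avoids every period of $\periodset{}$, which is what was to be shown.
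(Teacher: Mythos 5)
Your proof is correct and is essentially the paper's argument: both rest on the observation that each $p\in\periodset{}$ has an integer multiple $p'\in\periodset{}'$, and that avoiding $p'$ forces avoiding $p$ (you state this as the contrapositive, namely that $p$-periodicity implies $mp$-periodicity, which is the same fact). The extra care you take with the definition of $\lcm$ on colinear vectors is reasonable but not a departure from the paper's route.
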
  
\begin{proof}
  Each period $p$ in $\periodset{}$ has an integer multiple $p'\in\periodset{}'$. Each
  avoidance of $p'$ induces an avoidance of $p$.
\end{proof}
\begin{lemma}
Let $\periodset{}$ be a set of pairwise non-colinear periods. Let $x$ be a
configuration avoiding $\periodset{}$. Then $x$ avoids 
every period of $\periodset{}$ in some ball of radius $\sum_{p \in \periodset{}}\norm{p}$.
\label{lem:ball}
\end{lemma}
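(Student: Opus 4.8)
The plan is to induct on $|\periodset{}|$, the heart of the matter being the case of two periods. The base case $|\periodset{}| = 1$ is immediate: a single avoidance $(z, z+p)$ lies in the ball of radius $\norm{p}$ centred at $z$.

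For two non-colinear periods $p, q$, the key is the following \emph{merging} step: if $(z, z+p)$ is a $p$-avoidance of $x$, then inspecting the four cells $x_z, x_{z+p}, x_{z+q}, x_{z+p+q}$ shows that \emph{either} one of $(z, z+q)$, $(z+p, z+p+q)$ is a $q$-avoidance, \emph{or} $(z+q, z+p+q)$ is again a $p$-avoidance --- for if all three fail then $x_z = x_{z+q} = x_{z+p+q} = x_{z+p}$, contradicting $x_z \neq x_{z+p}$. In the first case we are done, since $z$, $z+p$, $z+q$, $z+p+q$ all lie in the ball of radius $\norm{p} + \norm{q}$ centred at $z$ (using $\norm{p+q} \leq \norm{p}+\norm{q}$); in the second case we may translate the $p$-avoidance by $q$ and repeat, and symmetrically by $-q$. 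So, starting from any $p$-avoidance and sliding along $\pm q$, either we eventually expose a $q$-avoidance adjacent to the current $p$-avoidance and conclude, or the slide never halts, in which case $x$ is constant, with two distinct values, on the parallel lines $\{z + kq : k \in \ZZ\}$ and $\{z + p + kq : k \in \ZZ\}$ (distinct because $x_z \neq x_{z+p}$).

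If this last situation occurs, I would run the symmetric argument on a $q$-avoidance of $x$ (which exists since $x$ avoids $q$), sliding it along $\pm p$: either we conclude as above, or $x$ is also constant, with two distinct values, on two lines in direction $p$. Now a line in direction $q$ and a line in direction $p$ meet in $\RR^2$; when they meet at an integer point the two pairs of distinct constants are forced to coincide, a contradiction. This settles the two-period case. For $|\periodset{}| = n > 2$ I would add the periods one at a time: given a ball of radius $\sum_{i<n}\norm{p_i}$ witnessing avoidances of $p_1, \dots, p_{n-1}$, take a $p_n$-avoidance somewhere and transport it toward the centre of that ball by iterating the merging step along $p_1, \dots, p_{n-1}$ --- these being pairwise non-colinear, they span a finite-index sublattice of $\ZZ^2$ once $n \geq 3$, so the centre is reachable up to bounded error --- each step either preserving a $p_n$-avoidance or exposing an avoidance of some $p_i$ near the current position to be reused; the radius then grows by exactly $\norm{p_n}$, matching $\sum_{p \in \periodset{}}\norm{p}$.

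I expect the main obstacle to be exactly the non-terminating slide: extracting a genuine contradiction from ``$x$ is constant on two parallel lines'' when the relevant lines in directions $p$ and $q$ meet only at a non-integer point, that is, when $\ZZ p + \ZZ q \subsetneq \ZZ^2$. This is, I believe, precisely where the additive bound $\sum_{p}\norm{p}$ --- rather than the naive $\max_p \norm{p}$ --- is needed, to absorb the lattice defect when relocating an avoidance onto a constant line, and making the bookkeeping of which avoidance sits in which ball close cleanly is the delicate point. A secondary care point is the $n \geq 3$ accumulation, where one must check that transporting the freshly added avoidance does not destroy the ones already placed.
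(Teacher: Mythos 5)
Your merging step and the overall translate-until-something-happens strategy are in the right spirit, but there is a genuine gap at exactly the point you flag, and it is not closable in the form you propose: in the doubly non-terminating case there is \emph{no contradiction to extract}. Take $p=(2,0)$ and $q=(0,2)$; a configuration can perfectly well be constant with distinct values on the two $q$-lines $\{0\}\times 2\ZZ$ and $\{2\}\times 2\ZZ$, and also constant with distinct values on the two $p$-lines $(2\ZZ+1)\times\{1\}$ and $(2\ZZ+1)\times\{3\}$, since these four integer lines are pairwise disjoint. What saves the lemma is not a contradiction but a direct construction: once $x$ takes distinct constant values on $L_1=z+\ZZ q$ and $L_2=z+p+\ZZ q$, you have a $p$-avoidance based at \emph{every} point of $L_1$, and symmetrically a $q$-avoidance based at every point of $M_1=w+\ZZ p$. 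Because $p$ and $q$ are non-colinear, $\ZZ p+\ZZ q$ is a full-rank lattice, so some $z+kq$ and $w+lp$ lie within $\norm{p}/2+\norm{q}/2$ of each other; centring the ball at whichever of these two bases carries the avoidance of the longer period puts both avoidances inside radius $\norm{p}+\norm{q}$. This ``bring them close, not equal'' step is the missing idea; it is also how the paper argues, except that it translates whole balls rather than single avoidances.

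Your induction step for $n>2$ has a second, related problem. You transport a single $p_n$-avoidance toward a ball already containing avoidances of $p_1,\dots,p_{n-1}$, claiming each merging step either preserves the $p_n$-avoidance or exposes an avoidance of some $p_i$ ``near the current position to be reused''. But a $p_i$-avoidance exposed halfway along the transport is adjacent only to the travelling $p_n$-avoidance, not to the target ball: you then hold avoidances of $p_n$ and one $p_i$ at one location, and of all the others at a distant location, and nothing in your scheme reassembles them; the claim that the radius grows by exactly $\norm{p_n}$ is also unjustified given the ``bounded error'' you invoke. The paper resolves this by applying the induction hypothesis \emph{twice}, producing one ball containing avoidances of $\periodset{}\setminus\{p_n\}$ and another containing avoidances of $\periodset{}\setminus\{p_0\}$, then translating the first by multiples of $p_n$ and the second by multiples of $p_0$: whichever translation fails exposes precisely the avoidance missing from the ball being moved, and if neither fails the two balls can be brought within $\norm{p_0}/2+\norm{p_n}/2$ of each other by the same lattice argument, at which point a ball of radius $\sum_{p\in\periodset{}}\norm{p}$ swallows both. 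You would need analogous two-sided bookkeeping for your induction to close.
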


\begin{proof}
  We prove the result by induction on the number of periods in $\periodset{}$. When $\periodset{}$ is
  a singleton the case is trivial. Now suppose $\periodset{}$ is not a
  singleton. Denote $p_0, p_1,\dots, p_n$ the periods in $\periodset{}$. By induction hypothesis, we can find a ball $B_{n-1}$ of radius $\sum_{i<n}
    \norm{p_i}$ centered in $b_{n-1}$ containing avoidances for every period
  in $\periodset{}$ except $p_n$.
  Similarly, we find a ball $B'_{n-1}$ of radius $\sum_{i > 0} \norm{p_i}$, centered
  in $b'_{n-1}$ that contains avoidances of every period in 
  $\periodset{}$  except $p_0$.
  We now show that either an avoidance of $p_0$ exists near a copy of $B'_{n-1}$
  or an avoidance of $p_n$ exists near a copy of $B_{n-1}$.

  Consider the ball $B(b_{n-1}+p_n,\sum_{i < n}\norm{p_i})$, the translated image of $B_{n-1}$ by the vector $p_n$. Either $x_{z}=x_{z + p_n}$ for all $z\in B_{n-1}$, \ie{}
  the two balls are filled the same way, or we found an avoidance $(z,z+p_n)$
  with $z\in B_{n-1}$. In the latter case, the result is proved.

  As depicted in Figure~\ref{fig:gather}, this process can be iterated for both $B'_{n-1}$ and $B_{n-1}$ until either we find the
  necessary avoidance or the centers of the balls are close to each other: Since
  $p_0$ and $p_n$ are not colinear, and assuming $\norm{p_n} \geq \norm{p_0}$, there exists $i,j\in\ZZ$ such that $\norm{b_{n-1}
    + ip_n - b'_{n-1} + jp_0} \leq \norm{\frac{p_n}{2}} + \norm{\frac{p_0}{2}} < \norm{p_n}$.
	We thus found a ball centered in $b'_{n-1}+j p_0$ and of radius $\sum_{i}\norm{p_i}$ containing the two balls we translated, and therefore an  avoidance of each period in $\periodset{}$. Denote $B_n$ this new ball and $b_n$ its center.
\end{proof}
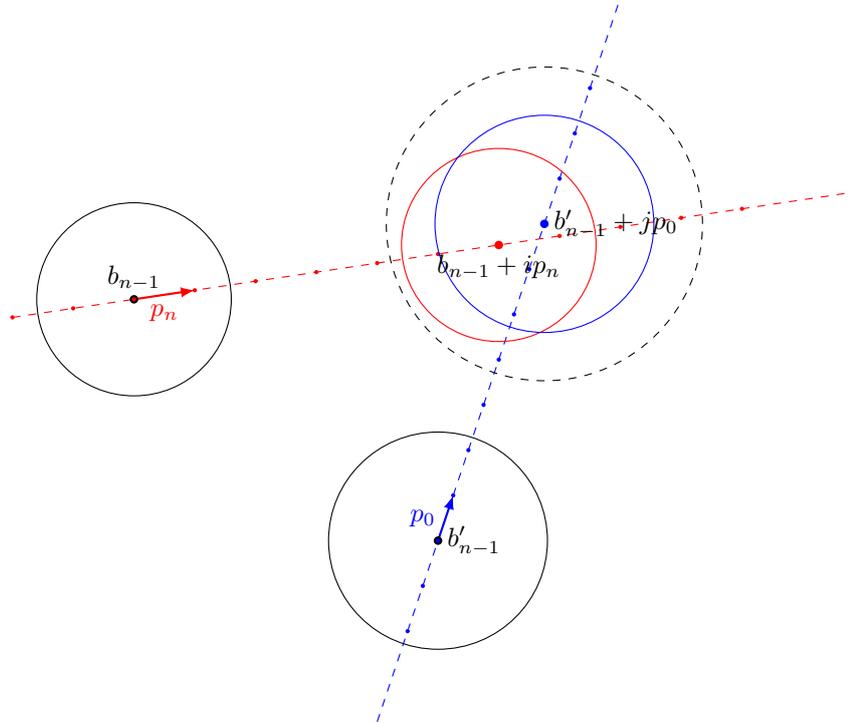
\begin{figure}[htb]
    \centering
	\begin{tikzpicture}[scale=0.8]
	\draw[dashed, red] (0,4.7) -- (14,6.8);
	\draw[dashed, blue] (6,-2) -- (10,10);
	\draw[thick,red,->,>=latex] (2,5) -- node[anchor = north]{$p_n$} (3,5.15);
	\draw[thick,blue,->,>=latex] (7,1) -- node[anchor = east]{$p_0$} (7.25,1.75);
	\draw[anchor = south] (2,5) node{$b_{n-1}$};
	\fill (2,5) circle(2 pt); 
	\draw (2,5) circle(1.6 cm);
	\draw[anchor = west] (7,1) node{$b'_{n-1}$};
	\fill (7,1) circle(2pt);
	\draw (7,1) circle(1.8 cm);
	\foreach \x in {-2,-1,...,10}
		\fill[red] (2+1*\x,5+0.15*\x) circle(1 pt);
	\foreach \x in {-2,-1,...,10}
		\fill[blue] (7+0.25*\x,1+0.75*\x) circle(1 pt);
	\draw[red] (8,5.9) circle(1.6cm); 
	\draw[blue] (8.75,6.25) circle(1.8cm);
	\fill[red] (8,5.9) circle(2pt);
	\fill[blue] (8.75,6.25) circle(2pt);
	\draw[anchor = west] (8.75,6.25) node{$b'_{n-1} + jp_0$};
	\draw[anchor = north] (8,5.9) node{$b_{n-1} + ip_n$};
	\draw[dashed] (8.75,6.25) circle(2.6cm);

	\end{tikzpicture}
	\caption{
    The process of translating $B_{n-1}$ and $B'_{n-1}$ close to each other: each
    translation may uncover the desired avoidance and if not, the two balls next to each other necessarily do so.
	}
	\label{fig:gather}
\end{figure}
In the previous proof, the distance between $b_{n-1}$ and $b_n$ only depends on $p_0$, $p_n$ and the distance between $b_{n-1}$ and $b'_{n-1}$.

Therefore there is a computable function $f(p_0, p_n ,r)$ such that, if $b_{n-1}$ and $b'_{n-1}$ belong to a common ball $B(z,r)$, then $\norm{b_n-z}\leq f(p_0, p_n ,r)$.

\begin{lemma}\label{lem:ballprime}

Let $\periodset{} = \{p_0, \dots, p_n\}$ be a set of non-colinear periods. Define $f'(\periodset{}, r)$ recursively as:\begin{itemize}
\item if $n=0, f'(\periodset{}, r) = r$;
  \item if $n>1, f'(\periodset{}, r) = f\big(p_0, p_n, \max\big[f'(\periodset{}\backslash\{p_0\}, r), f'(\periodset{}\backslash\{p_n\}, r)\big]\big)$
  \end{itemize}

 Take $x\in X$, and assume that $x$ avoids every period $p\in\periodset{}$ in some ball $B(z,r)$. Then $x$ avoids every period $p\in\periodset{}$ in some ball $B(z', \sum_\periodset{}\norm{p})$, with $\norm{z'-z}\leq f'(\periodset,\,r)$.
\end{lemma}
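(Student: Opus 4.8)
The plan is to prove Lemma~\ref{lem:ballprime} by induction on $n = |\periodset{}|$, essentially re-running the argument of Lemma~\ref{lem:ball} but keeping careful track of \emph{where} the balls live relative to the original ball $B(z,r)$ rather than just their radii. The base case $n=0$ is immediate: a single period, already avoided inside $B(z,r)$, and $f'(\periodset{},r)=r$ gives $\norm{z'-z}\le r$ as required (with room to spare since the target radius is $\norm{p_0}$).

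For the inductive step, suppose $\periodset{} = \{p_0,\dots,p_n\}$ with $n\ge 1$ and that $x$ avoids every period of $\periodset{}$ inside $B(z,r)$. Applying the induction hypothesis to $\periodset{}\setminus\{p_0\}$ gives a ball $B_{n-1} = B(b_{n-1}, \sum_{i>0}\norm{p_i})$ of avoidances for every $p_i$ with $i>0$, and crucially with $\norm{b_{n-1}-z}\le f'(\periodset{}\setminus\{p_0\},r)$; symmetrically, applying it to $\periodset{}\setminus\{p_n\}$ gives $B'_{n-1} = B(b'_{n-1},\sum_{i<n}\norm{p_i})$ of avoidances for every $p_i$ with $i<n$, with $\norm{b'_{n-1}-z}\le f'(\periodset{}\setminus\{p_n\},r)$. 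Hence both centers lie in the common ball $B(z,R)$ where $R = \max\big[f'(\periodset{}\setminus\{p_0\},r), f'(\periodset{}\setminus\{p_n\},r)\big]$. Now I invoke exactly the construction from the proof of Lemma~\ref{lem:ball}: translating copies of $B_{n-1}$ along $p_0$ (to hunt an avoidance of $p_0$) and copies of $B'_{n-1}$ along $p_n$ (to hunt an avoidance of $p_n$) until either such an avoidance is found — in which case, as in that proof, we already have a ball of the required radius $\sum_{p\in\periodset{}}\norm{p}$ containing everything — or the translated centers come within distance $<\norm{p_n}$ of each other, producing the ball $B_n = B(b_n,\sum_{p\in\periodset{}}\norm{p})$ with $b_n = b'_{n-1}+jp_0$. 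The observation recorded just before the lemma statement is precisely what closes the induction: since $b_{n-1}$ and $b'_{n-1}$ both lie in $B(z,R)$, the computable function $f$ satisfies $\norm{b_n - z}\le f(p_0,p_n,R) = f'(\periodset{},r)$, which is the claimed bound on $\norm{z'-z}$ with $z'=b_n$.

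Two small points need attention to make the induction go through cleanly. First, one must check that the balls produced by the induction hypothesis really do still contain the stated avoidances: the induction hypothesis for $\periodset{}\setminus\{p_0\}$ guarantees a ball of radius $\sum_{i>0}\norm{p_i}$ containing an avoidance of each $p_i$, $i>0$, which is exactly $B_{n-1}$ as used in Lemma~\ref{lem:ball}, so no loss occurs. Second, the translation-and-stopping argument of Lemma~\ref{lem:ball} uses non-colinearity of $p_0$ and $p_n$ and the normalization $\norm{p_n}\ge\norm{p_0}$ (otherwise swap the roles of $p_0$ and $p_n$, which $f'$ tolerates since it is symmetric in those two arguments up to the symmetry of $f$); this is unchanged here.

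The main obstacle — really the only nontrivial content — is bookkeeping: ensuring that the recursive definition of $f'$ matches exactly the sequence of operations in the proof, namely that the radius $R$ fed into $f$ at the top level is the max of the two recursively obtained displacement bounds, and that $f(p_0,p_n,R)$ as extracted from Lemma~\ref{lem:ball}'s proof genuinely bounds $\norm{b_n-z}$ and not merely $\norm{b_n - b'_{n-1}}$ or some other intermediate quantity. Once the "distance between $b_{n-1}$ and $b_n$ only depends on $p_0$, $p_n$ and the distance between $b_{n-1}$ and $b'_{n-1}$" remark is taken as given (as it is in the excerpt), the rest is a direct transcription of Lemma~\ref{lem:ball} with displacements tracked alongside radii, and the proof is short.
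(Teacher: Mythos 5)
Your proof is correct and follows essentially the same route as the paper's: induction on $n$, applying the induction hypothesis to $\periodset{}\setminus\{p_0\}$ and $\periodset{}\setminus\{p_n\}$ to place both centers in $B(z,R)$ with $R$ the max of the two recursive bounds, then invoking the construction of Lemma~\ref{lem:ball} together with the remark defining $f$ to bound $\norm{b_n-z}$ by $f(p_0,p_n,R)=f'(\periodset{},r)$. Your version is in fact slightly more careful than the paper's (which is a two-line sketch), notably in tracking which ball misses which period and in covering the $n=1$ case.
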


\begin{proof}
  We prove the lemma by induction. If $n=0$, the result is obvious.
  
%  If $n=1$, then by the lemma hypothesis we find two balls $B_1 = B(b_1, \norm{p_0})$ and $B'_1 = B(b'_1, \norm{p_1})$ in $B(z,r)$ that contain an avoidance of $p_0$ and $p_1$, respectively. Applying Lemma~\ref{lem:ball} on these balls, we obtain the result.

  Now assume $n>1$. By applying the induction hypothesis twice on $p_0,\dots,p_{n-1}$ and $p_1, \dots, p_n$, we find two balls $B_{n-1} = B(b_{n-1}, \sum_{i<n}\norm{p_i})$
  and $B'_{n-1} = B(b'_{n-1}, \sum_{i>0}\norm{p_i})$ such that $\norm{b_n-z}\leq f'(\periodset{}\backslash\{p_n\}, r)$ and $\norm{b'_n-z}\leq f'(\periodset{}\backslash\{p_0\}, r)$. Applying Lemma~\ref{lem:ball} on these balls, we obtain the desired ball with $\norm{b_n - z}\leq f'(\periodset{}, r)$.
\end{proof}

The next lemma states that in a configuration avoiding periods $p_1, \dots, p_n$, we can organise the avoidances in concentric balls around a common center, so that the distance of the avoidance of any given period from the center does not depend on $n$ but only on the period itself.
	
 \begin{lemma}\label{lem:concentric}
   Let $\periodset{n} = \{p_0, \dots, p_n\}$ be a set of periods. Denote $P_i = \{p_0, \dots, p_i\}$ for $i\leq n$. Define recursively a function $g$ such that
\[ g(\{p\}) =\norm{p}\quad\text{and}\quad g(\periodset{n}) = g(\periodset{n-1}) + f'\left(\periodset{n-1}', \sum_{\periodset{n}'}\norm{p}\right) + \sum_{\periodset{n}'}\norm{p}\]
Take $x$ a point that avoids $\periodset{n}'$ in a ball $B(z,\sum_{\periodset{n}'}\norm{p})$. There exist $z'\in\ZZ^2$ such that: 
   \begin{itemize}
   \item $\norm{z' - z} \leq g(\periodset{n})$
   \item x avoids $\periodset{i}$ in the ball $B(z',g(\periodset{i}))$ for any $i \leq n$.
   \end{itemize}
   
    \end{lemma}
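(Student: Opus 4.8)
The plan is a straightforward induction on $n$, with the recursive definition of $g$ designed precisely to record how much the radius and the displacement of the common center grow at each step. The one preliminary I will use repeatedly is a metric version of the elementary fact that an avoidance of a period $p'$ induces an avoidance of every period $p$ dividing $p'$: if $(w,w+p')$ is an avoidance of $p'$ with both endpoints in a ball $B(z,r)$ and $p'=kp$, then since $x_w\neq x_{w+kp}$ two consecutive points among $w,w+p,w+2p,\dots,w+p'$ — all of which lie on the segment from $w$ to $w+p'$, hence in the convex set $B(z,r)$ — form an avoidance of $p$ inside $B(z,r)$. Consequently, if $x$ avoids a set of periods $\mathcal{Q}$ in a ball $B(z,r)$, then it avoids in the same ball every set of periods all of whose elements divide elements of $\mathcal{Q}$.

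For the base case $n=0$ we have $\periodset{0}=\periodset{0}'=\{p_0\}$, $g(\{p_0\})=\norm{p_0}$, and $z'=z$ works. For the inductive step, assume the lemma for $n-1$. Since $\periodset{n-1}=\periodset{n}\setminus\{p_n\}$, every element of $\periodset{n-1}'$ divides the element of $\periodset{n}'$ in its colinearity class (an lcm over a smaller set), so by the observation above $x$ avoids $\periodset{n-1}'$ inside $B(z,\sum_{\periodset{n}'}\norm{p})$. As $\periodset{n-1}'$ is a set of pairwise non-colinear periods, Lemma~\ref{lem:ballprime} supplies a center $z_1$ with $\norm{z_1-z}\leq f'\bigl(\periodset{n-1}',\sum_{\periodset{n}'}\norm{p}\bigr)$ such that $x$ avoids $\periodset{n-1}'$ in $B\bigl(z_1,\sum_{\periodset{n-1}'}\norm{p}\bigr)$. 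Applying the induction hypothesis to $\periodset{n-1}$, with this ball as starting ball, produces a center $z'$ with $\norm{z'-z_1}\leq g(\periodset{n-1})$ such that $x$ avoids $\periodset{i}$ in $B(z',g(\periodset{i}))$ for every $i\leq n-1$.

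It remains to check the two conclusions for this $z'$. The displacement bound is the triangle inequality together with the recursion for $g$: $\norm{z'-z}\leq\norm{z'-z_1}+\norm{z_1-z}\leq g(\periodset{n-1})+f'\bigl(\periodset{n-1}',\sum_{\periodset{n}'}\norm{p}\bigr)\leq g(\periodset{n})$. For $i\leq n-1$ the avoidance clause is exactly what the induction hypothesis gave, so only $i=n$ remains: by the observation above $x$ avoids all of $\periodset{n}$ inside $B(z,\sum_{\periodset{n}'}\norm{p})$, and since $\norm{z-z'}\leq g(\periodset{n-1})+f'\bigl(\periodset{n-1}',\sum_{\periodset{n}'}\norm{p}\bigr)$ this ball is contained in $B\bigl(z',\,g(\periodset{n-1})+f'(\periodset{n-1}',\sum_{\periodset{n}'}\norm{p})+\sum_{\periodset{n}'}\norm{p}\bigr)=B(z',g(\periodset{n}))$, which is what we want. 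I expect the only real care to be needed in the bookkeeping — in particular, keeping track that $\periodset{n-1}'$ and $\periodset{n}'$ are related colinearity-class by colinearity-class through divisibility rather than by inclusion, and that the successive radii telescope exactly into the recursive definition of $g$; apart from that, each step is a routine application of an already proved lemma.
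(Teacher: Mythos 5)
Your proof is correct and follows essentially the same route as the paper's: the same induction on $n$, the same observation that avoidances of the lcm periods in $\periodset{n}'$ induce avoidances of the (divisor) periods of $\periodset{n-1}'$ and of $\periodset{n}$ within the same ball, the same application of Lemma~\ref{lem:ballprime} followed by the induction hypothesis, and the same final containment $B(z,\sum_{\periodset{n}'}\norm{p})\subseteq B(z',g(\periodset{n}))$ for the case $i=n$. Your bookkeeping of the triangle inequality is in fact slightly cleaner than the paper's (which transposes the arguments of $f'$ in one line), so there is nothing to fix.
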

\begin{proof}
  We proceed by induction on $n$. 

If $n=0$, then since $B(z,\norm{p_0})$ contains an avoidance of $\periodset{0} = \{p_0\}$, taking $z'$ to be this avoidance satisfies the requisite.

Assume $n>0$. Since $B(z,\sum_{\periodset{n}'}\norm{p})$ contains avoidances of every period in $\periodset{n}'$, it contains avoidances of every period in $\periodset{n-1}'$. Indeed, if some period $q$ is in $\periodset{n-1}'$ but not in $\periodset{n}'$, then by construction a multiple of $q$, say $Mq$, is in $\periodset{n}'$. Now if $(z_q, z_q+Mq)$ is an avoidance of $Mq$, at least one of $(z_q+mq, z_q+(m+1)q)$ for $m<M$ is an avoidance of $q$ and is contained in the same ball.

Applying Lemma~\ref{lem:ballprime} on $\periodset{n-1}'$, we find a ball $B(z_0, \sum_{\periodset{n-1}'}\norm{p})$ that contains avoidances for all periods in $\periodset{n-1}'$ and such that $\norm{z_0-z}\leq f'(\periodset{n-1}', \sum_{\periodset{n}'}\norm{p})$. 

Now apply the induction hypothesis on this ball, obtaining $z'$ such that $\norm{z' - z_0} \leq g(\periodset{n-1})$ and for any $i\leq n-1$, the ball $B(z',g(\periodset{i}))$ contains avoidances of every period in $\periodset{i}$. This inductive process is depicted in Figure~\ref{fig:heredite}.

By the triangular inequality, $\norm{z'-z} \leq g(\periodset{n-1}) + f'(\periodset{n}', \sum_{\periodset{n-1}'}\norm{p})$. Since $g(P_n) \geq \norm{z-z'}+\sum_{\periodset{n}'}\norm{p}$, $B(z', g(\periodset{n}))$ contains entirely the ball $B(z,\sum_{\periodset{n}'}\norm{p})$. Therefore $B(z', g(\periodset{n}))$ avoids $P_n$ and $\norm{z'-z} \leq g(\periodset{n})$, proving the lemma.

  \begin{figure}
    \centering
    \begin{tikzpicture}[scale=0.9]
      \begin{scope}
      \clip (-2.5,2.5) rectangle (10.5,-6.7);
      \draw[red] (0,0) circle(2cm); 
      \draw[dashed,red] (0,-1.4) -- node[anchor=west]{$p_6$} (0.5,1);
      \draw[dashed,blue] (-0.5,-0.3) --node[anchor=south]{$p_5$} (-2,0);
      \draw[dashed,blue] (0.5,1) --node[anchor=south]{$p_4$} (1.75,1);
      \draw[dashed,blue] (0,-1.4) --node[anchor=east]{$p_3$} (-0.5,-0.3);
      \draw[dashed,blue] (-0.5,-0.3) --node[anchor=west]{$p_2$} (-0.2,0.6);
      \draw[dashed,blue] (0,-1.4) --node[anchor=east]{$p_1$} (-0.3,-2);
      \draw[dashed,blue] (0,-1.4) --node[anchor=north]{$p_0$} (0.4,-1.2);
	
	  \draw[thick,>=latex,->] (1.95,0.4) -- (3.57,0.72);
	
      \draw[blue] (5,1) circle(1.5cm); 
      \draw[dashed,blue] (5.5,0.7) --node[anchor=north west]{$p_0$} (5.9,0.9); 
      \draw[dashed,blue] (5.9,0.9) --node[anchor=west]{$p_3$} (5.4,2);
      \draw[dashed,blue] (5.5,0.7) --node[anchor=north west]{$p_1$} (5.2,0.1); 
      \draw[dashed,blue] (4,1) --node[anchor=west]{$p_2$} (4.3,1.9);
      \draw[dashed,red] (5.5,0.7) --node[anchor=south]{$p_5$} (4,1); 
      \draw[dashed,blue] (3.95,0.1) --node[anchor=south]{$p_4$} (5.2,0.1);
	
	  \draw[thick,>=latex,->] (5.85,-0.25) -- (6.4,-1);
	
      \draw[blue] (7,-1.9) circle(1.1cm); 
      \draw[dashed,blue] (7.25,-1.9) --node[anchor=north]{$p_0$} (7.65,-1.7); 
      \draw[dashed,blue] (6.4,-1.7) --node[anchor=west]{$p_3$} (6.9,-2.8);
      \draw[dashed,blue] (6.4 ,-1.7) --node[anchor=east]{$p_1$} (6.7,-1.1); 
      \draw[dashed,blue] (7.25,-1.9) --node[anchor=west]{$p_2$} (7.55,-1); 
      \draw[dashed,red] (6.4,-1.7) --node[anchor=south]{$p_4$} (7.65,-1.7);
	
	  \draw[thick,>=latex,->] (6.32,-2.74) -- (5.95,-3.1);
	  
      \draw[blue] (5.4,-3.8) circle(0.9cm); 
      \draw[dashed,blue] (5.2,-4.5) --node[anchor=north]{$p_0$} (5.6,-4.3); 
      \draw[dashed,red] (5.6,-4.3) --node[anchor=south west]{$p_3$} (5.1,-3.2); 
      \draw[dashed,blue] (4.8 ,-3.8) --node[anchor=east]{$p_1$} (5.1,-3.2); 
      \draw[dashed,blue] (5.6,-4.3) --node[anchor=west]{$p_2$} (5.9,-3.4);
      
      \draw[thick,>=latex,->] (6.3,-3.95) -- (6.68,-4.05);
      
      \draw[blue] (7.4,-4.2) circle(0.75cm); 
      \draw[dashed,blue] (7.1,-3.9) --node[anchor=south]{$p_0$} (7.5,-3.7); 
      \draw[dashed,blue] (7.4,-4.8) --node[anchor=west]{$p_1$} (7.7,-4.2); 
      \draw[dashed,red] (7.1,-3.9) --node[anchor=east]{$p_2$} (7.4,-4.8);
	
	  \draw[thick,>=latex,->] (7.75,-4.95) -- (7.78,-5);

      \fill[black] (8,-5.5) circle(0.05cm);
      \draw[blue] (8,-5.5) circle (0.6cm); 
      \draw[dashed,red] (8,-5.4) --node[anchor=north west]{$p_0$} (8.4,-5.2); 
      \draw[dashed,red] (8,-5.4) --node[anchor=north west]{$p_1$} (7.7,-6);
      
      \draw[dashed] (8,-5.5) circle (0.7cm);
      \draw[dashed] (8,-5.5) circle (2.1cm);
      \draw[dashed] (8,-5.5) circle (3.7cm);
      \draw[dashed] (8,-5.5) circle (5.7cm);
      \draw[dashed] (8,-5.5) circle (8cm);
      \draw[dashed] (8,-5.5) circle (10.5cm);
      
      \draw[->,>=latex] (8,-5.5) -- +(170:0.7cm);
      \draw (8,-5.5) +(135:0.4) node{$g(\periodset{1})$};
      \draw[->,>=latex] (8,-5.5) -- +(173:2.1cm);
      \draw (8,-5.5) +(163:1.4) node{$g(\periodset{2})$};
      \draw[->,>=latex] (8,-5.5) -- +(176:3.7);
      \draw (8,-5.5) +(171:2.9) node{$g(\periodset{3})$};
      \draw[->,>=latex] (8,-5.5) -- +(179:5.7);
      \draw (8,-5.5) +(176:4.7) node{$g(\periodset{4})$};
      \draw[->,>=latex] (8,-5.5) -- +(182:8);
      \draw (8,-5.5) +(179:6.85) node{$g(\periodset{5})$};
      \draw[->,>=latex] (8,-5.5) -- +(185:10.5);
      \draw (8,-5.5) +(182:9.25) node{$g(\periodset{6})$};
      
      \end{scope}
    \end{tikzpicture}
    
    \caption{The red ball contains avoidance for $\periodset{n}'$. Each dashed line represents an avoidance. We then consider the set of blue lines, $\periodset{n-1}'$. We gather these avoidances in a close smaller ball with Lemma~\ref{lem:ballprime}. We repeat this process until there are only two avoidances in the ball. The red avoidances are disposed in the way we wanted, around the center of the smallest blue ball.}
    \label{fig:heredite}
  \end{figure}
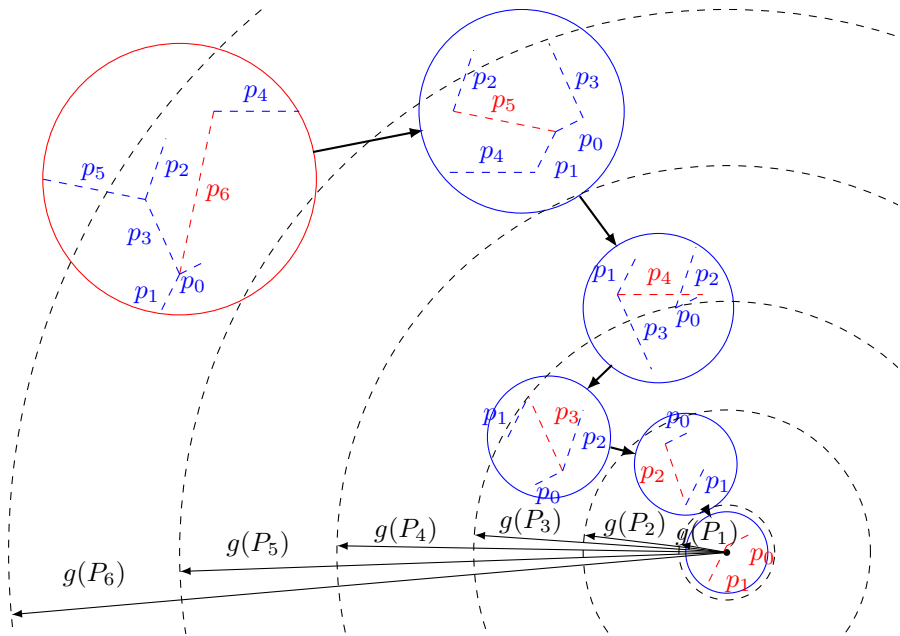

\end{proof}

\begin{reptheorem}{thm:main}
  Let $X$ be a $\ZZ^2$-subshift. 
  Assume that for every finite set of periods $\periodset{}$, $X$ contains a configuration that avoids all periods of $\periodset{}$.
  
  Then $X$ contains an aperiodic configuration that has an avoidance of every period $p$ at distance at most $g'(\norm{p}) = g(B(0,\norm{p})')$ from $0$, where $g$ is the function defined in Lemma~\ref{lem:concentric}.
\end{reptheorem}

\begin{remark}
$g'$ is not polynomial because $B(0,n)'$ contains exponentially large vectors. Since our bounds are likely very far from sharp, we leave the exact computation to the reader.
\end{remark}

\begin{proof}
Take $B(0,n)$ the set of periods of norm $n$ or less and $x_1, \dots, x_n, \dots$ a sequence of configurations such that $x_n$ avoids $B(0,n)'$.

By applying Lemma~\ref{lem:ball} on $B(0,n)'$, we obtain for each $x_n$ a ball of radius $\sum_{B(0,n)'}\norm{p}$ that avoids $B(0,n)'$. Applying Lemma~\ref{lem:concentric} on $B(0,n)$, we get that $x_n$ avoids all periods in $B(0,i)'$ in some ball $B(z_n,g(B(0,i))'))$ for all $i\leq n$. Since $X$ is compact, any limit point of the sequence $(\sigma_{z_n}(x_n))_{n\in\NN}$ is in $X$, and it avoids all periods in $B(0,i)$ in the ball $B(0,g(B(0,i)'))$ for all $i\in\NN$. It is in particular aperiodic.\end{proof}

\begin{repcorollary}{cor:main}
For any subshift $X$ with no aperiodic point, there is a finite set of periods $\periodset{}$ such that any configuration of $X$ is periodic for some period $p\in\periodset{}$.
\end{repcorollary}

\begin{proof} This is a simple reciprocal to Theorem~\ref{thm:main}.
\end{proof}

\section{Consequences}\label{S:consequences}

\subsection{Existence of an aperiodic configuration is \pizu}
\begin{corollary}
  The following problem is \pizu-computable:
  \begin{description}
  \item[Input] A finite set of forbidden patterns $\F$.
  \item[Output] Does the $\ZZ^2$-SFT $\subshift\forbid$ contain an aperiodic configuration?
  \end{description}
  \label{thm:pi1aper}
\end{corollary}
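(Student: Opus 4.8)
The plan is to reduce the question ``does $X_{\mathcal F}$ contain an aperiodic configuration?'' to a $\Pi^0_1$ predicate by exploiting Theorem~\ref{thm:main}, which converts the \emph{a priori} quantification over infinite configurations (naively a $\Sigma^1_1$-looking statement) into a quantification over arbitrarily large finite sets of periods. The key observation is the following dichotomy, obtained by combining Theorem~\ref{thm:main} with its contrapositive (Corollary~\ref{cor:main}): the SFT $X_{\mathcal F}$ contains an aperiodic point \emph{if and only if} for every finite set of periods $\periodset{}$, $X_{\mathcal F}$ contains a configuration avoiding all periods in $\periodset{}$. One direction is immediate (an aperiodic point avoids every period, in particular those in $\periodset{}$); the other is exactly the conclusion of Theorem~\ref{thm:main}.

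So it suffices to show that the predicate
\[
\forall \periodset{}\text{ finite}\ \colon\ X_{\mathcal F}\text{ contains a configuration avoiding }\periodset{}
\]
is $\Pi^0_1$. The outer quantifier is a universal quantifier over a computably enumerable set (finite sets of vectors in $\ZZ^2$), so it remains to check that the inner statement ``$X_{\mathcal F}$ contains a configuration avoiding $\periodset{}$'' is co-recursively enumerable, i.e.\ $\Pi^0_1$, uniformly in $\periodset{}$ and $\mathcal F$. For this I would use a compactness argument identical to the classical proof that nonemptiness of an SFT is $\Pi^0_1$: a configuration of $X_{\mathcal F}$ avoiding all periods of $\periodset{}$ exists if and only if, for every $N$, there is a pattern on $B(0,N)$ that is $\mathcal F$-admissible (contains no forbidden pattern) and that, for each $p\in\periodset{}$, exhibits an avoidance $(z,z+p)$ inside $B(0,N)$ with $x_z\neq x_{z+p}$. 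Indeed, if such patterns exist for all $N$, a diagonal/compactness extraction yields a configuration in $X_{\mathcal F}$ realizing an avoidance of each $p\in\periodset{}$; conversely such a configuration restricts to such patterns. The existence of an admissible pattern with the required avoidances on a fixed $B(0,N)$ is decidable (finitely many patterns to check), so ``for all $N$ such a pattern exists'' is $\Pi^0_1$.

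Putting the pieces together: the whole predicate is of the form ``for all $\periodset{}$, for all $N$, $\exists$ a finite admissible pattern on $B(0,N)$ with the prescribed avoidances'', where the innermost part is decidable; the two leading universal quantifiers collapse (one can dovetail over pairs $(\periodset{},N)$), leaving a single $\Pi^0_1$ statement. Hence the problem is $\Pi^0_1$-computable. I do not expect a serious obstacle here: the only point needing a little care is the uniformity and the correctness of the compactness step — making sure that realizing, for each $p\in\periodset{}$, an avoidance somewhere in an arbitrarily large window really does suffice to build, in the limit, a single configuration avoiding \emph{all} of $\periodset{}$ simultaneously (one should fix $\periodset{}$, exhaust $N\to\infty$, and extract a limit point, at which all finitely many avoidances persist). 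Given Theorem~\ref{thm:main}, everything else is the standard König's-lemma argument, so the proof is short.
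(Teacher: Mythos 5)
Your overall strategy --- reduce via Theorem~\ref{thm:main} to the statement ``for every finite $\periodset{}$, $\subshift\forbid$ contains a configuration avoiding $\periodset{}$'' and then express each inner statement as a \pizu{} condition on finite patterns --- is the same decomposition the paper uses, and the reduction itself is correct. The gap is in your compactness step. Your claimed equivalence, ``$\subshift\forbid$ contains a configuration avoiding $\periodset{}$ iff for every $N$ there is an $\F$-admissible pattern on $B(0,N)$ exhibiting an avoidance of each $p\in\periodset{}$ inside $B(0,N)$,'' fails in both directions as stated. The forward direction is false for small $N$ (if $\norm{p}>2N$, no pair $(z,z+p)$ even fits inside $B(0,N)$, so your universally quantified predicate is identically false); that is cosmetic. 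The serious problem is the backward direction: in your limit extraction, the avoidance of $p$ in the pattern $w_N$ sits at some position $z_N^{(p)}\in B(0,N)$ that may drift to infinity with $N$, so the avoidances do \emph{not} automatically persist in the limit point --- the limit could perfectly well be $p$-periodic. Equivalently, the compact sets ``extends a locally admissible pattern on $B(0,N)$ having avoidances somewhere in $B(0,N)$'' are not nested in $N$, so K\"onig's lemma does not apply. This is not a technicality: controlling where the avoidances sit is exactly what Lemmas~\ref{lem:ball}--\ref{lem:concentric} are for.

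To close the gap you need the quantitative content of Theorem~\ref{thm:main}, not only its qualitative conclusion: a $\periodset{}$-avoiding point can be chosen so that the avoidance of each period lies within a \emph{computable} radius of a common center. Accordingly, the paper's algorithm asks, for each $n$, for a locally admissible pattern on the ball of radius $g(P_n)$ whose avoidances of $P_k$ lie in the \emph{central} ball of radius $g(P_k)$ for every $k\le n$. With the avoidances pinned near the center, both directions go through: an aperiodic point supplies such patterns after translation (Lemma~\ref{lem:concentric}), and conversely, for $n$ large the central part of a locally admissible pattern on $B(0,g(P_n))$ is globally admissible, so the avoidances of each $P_k$ survive into $\langu{\subshift\forbid}$ and one concludes by Theorem~\ref{thm:main}. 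Without some such bound on the location of the avoidances, your predicate is not shown to be equivalent to the existence of an aperiodic point, and the argument as written does not establish membership in \pizu.
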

\begin{proof}
  Let  $(p_i)_{i\in\NN}$ be an enumeration of all possible periods and $P_n=\{p_0,\dots,p_n\}$.
  Theorem~\ref{thm:main} gives us a bound on the size of the patterns in which
  to look for avoidances of each period. For each $n\in\NN$ in order, the algorithm enumerates all patterns on a ball of radius $g(P_n)$ that do not contain a forbidden pattern, and check if one of them contains avoidances for every period of $P_k$ ($k\leq n$) in the ball of radius $g(P_k)$ in its center. If such a pattern does not exist for some $n$, it means that either the SFT is empty or that all its points are periodic for some period $p$ with $\norm{p}\leq n$.

  Assume the algorithm runs infinitely. For every $k$, there exists some $n\ge k$ such that if a pattern on the ball of radius $g(P_n)$ does not contain a forbidden pattern, then the subpattern on the ball of radius $g(P_k)$ is in the language of $X$. Therefore, for each $P_k$ we find a pattern in $\L(X)$ that avoids all periods of $P_k$, and we conclude by Theorem~\ref{thm:main}.
\end{proof}
\subsection{Structure of subshifts without aperiodic points}\label{S:charnoaper}

In this subsection, we consider notions from dynamical system theory. A dynamical system is given by a pair $(C, \Phi)$ where $C$ is a compact set and $\Phi:C\to C$ is a continuous function.

\begin{definition}[Topological conjugacy]
Let $(C_1, \Phi_1)$ and $(C_2, \Phi_2)$ be dynamical systems.

  $(C_1, \Phi_1)$ and $(C_2, \Phi_2)$ are \emph{topologically conjugate} if there exists a continuous bijection $\pi : C_1\to C_2$ such that $\pi\circ \Phi_1 = \Phi_2\circ \pi$.

  $(C_1, \Phi_1)$ and $(C_2, \Phi_2)$ are \emph{almost topologically conjugate} if there exists $(C_3, \Phi_3)$ and continuous surjections $\pi_i: C_3\to C_i$ that are bijective almost everywhere\footnote{Except for a finite set of points.} such that $\pi_i\circ \Phi_3 = \Phi_i\circ \pi_i$ for $i=1,2$.
\end{definition}

See \cite{LindMarcus} or \cite{Kitchens} for more information on topological conjugacy and almost conjugacy in the context of symbolic dynamics. We need slightly more general definitions since we consider subshifts of different dimensions.

Notice that we can have $(C_2, \Phi_2) = (C_3, \Phi_3)$ (and $\pi_2 = id$) in the last definition; this is the case in the next proof.

\begin{theorem}\label{thm:conjugacy}
Let $X$ be a two-dimensional subshift with no aperiodic point. There exists a vector $v$ and a one-dimensional subshift $Y$ such that $(X, \sigma_v)$ is almost topologically conjugate to $(Y, \sigma)$.
  
If $X$ is of finite type, then $Y$ can be chosen of finite type as well.
\end{theorem}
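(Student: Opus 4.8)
By Corollary~\ref{cor:main}, there is a finite set of periods $\periodset{} = \{p_1,\dots,p_m\}$ such that every configuration of $X$ has at least one period in $\periodset{}$. The plan is to first reduce to a single common period. For each $p \in \periodset{}$, let $X_p = \{x\in X \mid \sigma_p(x)=x\}$; this is a subshift of $X$ and $X = \bigcup_{p} X_p$. Each $X_p$ is invariant by periodicity $p$, hence (after choosing coordinates so that $p$ is primitive and completing $p$ to a basis) $X_p$ is, as a dynamical system under a suitable shift, conjugate to a $\ZZ$-subshift: a $p$-periodic configuration of $\ZZ^2$ is entirely determined by its restriction to a fundamental domain, which is a biinfinite strip $\ZZ \times \{0,\dots,k-1\}$, and reading this strip column by column with the alphabet $\Sigma^k$ gives a one-dimensional subshift $Y_p$. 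Under this identification the shift $\sigma_{v}$ of $\ZZ^2$ (for a well-chosen $v$ not colinear to $p$ and crossing one column) corresponds to $\sigma$ on $Y_p$; if $X$ is an SFT then so is $Y_p$ (the finite forbidden patterns translate into finitely many forbidden words over $\Sigma^k$, once the strip height $k$ is taken large enough).

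The second step is to choose a \emph{single} vector $v$ that works simultaneously for all the $p\in\periodset{}$. This is possible because $\periodset{}$ is finite: pick any $v$ that is not colinear to any $p_i$, e.g.\ $v = (1,N)$ for $N$ large, or more robustly $v=(1,0)$ after a preliminary change of basis if some $p_i = (k,0)$ — the point is that finitely many colinearity constraints can be avoided. Having fixed $v$, each $(X_{p_i}, \sigma_v)$ is topologically conjugate to $(Y_i, \sigma)$ for a $\ZZ$-subshift $Y_i$ (an SFT when $X$ is). Now take $Y$ to be a disjoint union of the $Y_i$, realised as a single $\ZZ$-subshift over the alphabet $\bigsqcup_i \Sigma^{k_i}$ (padding to a common height and adding a coordinate to record the index $i$, which is shift-invariant, so this stays an SFT). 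Then $(Y,\sigma) = \bigsqcup_i (Y_i,\sigma)$ and there is a continuous surjection $\pi : Y \to X$ sending each $Y_i$ onto $X_{p_i}$ by the inverse of the conjugacy, commuting with the shifts. For the almost-conjugacy we take $C_3 = Y$, $\pi_1 = \pi$ and $\pi_2 = \id$: the only failure of injectivity of $\pi$ comes from configurations lying in $X_{p_i}\cap X_{p_j}$ for $i\neq j$, i.e.\ configurations periodic for two non-colinear vectors, and there are only finitely many such configurations in a subshift (they are ``finite configurations'' in the sense recalled in the introduction via \cite[Theorem 3.8]{structural}). Hence $\pi$ is bijective off a finite set, as required.

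The main obstacle is handling the overlaps $X_{p_i}\cap X_{p_j}$ cleanly. There are two subtleties: first, a configuration may be periodic for $p_i$ and $p_j$ that are colinear — but then one of $p_i,p_j$ is an integer multiple of the other (up to sign) and $X_{p_i}\subseteq X_{p_j}$ after replacing $\periodset{}$ by $\periodset{}'$, so we may assume the elements of $\periodset{}$ are pairwise non-colinear; then any $x\in X_{p_i}\cap X_{p_j}$ has two non-colinear periods and the set of such $x$ is finite. Second, even a configuration periodic for a single $p_i$ but with a \emph{smaller} period may be counted in several strips of different heights — this is a genuine redundancy but it only affects finitely many $\sigma$-orbits (again the doubly-periodic configurations) or it can be absorbed by taking $Y$ to be the disjoint union rather than a quotient, so it does not obstruct surjectivity and only creates finitely many non-injectivity points. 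Once these points are isolated, checking the commutation $\pi\circ\sigma = \sigma_v\circ\pi$ and continuity of $\pi$ (it is given locally by a sliding-block-like decoding of a strip) is routine, and the SFT claim follows by tracking that all the constructions — periodic subshift, strip encoding, disjoint union — preserve finite type.
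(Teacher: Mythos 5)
Your proposal is correct and follows essentially the same route as the paper's proof: decompose $X=\bigcup_{p\in\periodset{}}X_p$ via Corollary~\ref{cor:main} with $\periodset{}$ made pairwise non-colinear, conjugate each $(X_p,\sigma_v)$ to a one-dimensional subshift by reading a fundamental strip, take the disjoint union, and observe that the only injectivity failures come from the finitely many $2$-periodic configurations in the overlaps $X_{p_i}\cap X_{p_j}$. The differences (reading columns rather than rows, and the explicit padding of alphabets in the disjoint union) are cosmetic.
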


\begin{proof}
  Let $X$ be a two-dimensional subshift of finite type with no aperiodic point. By Corollary~\ref{cor:main}, there is a finite set of periods $\periodset{}$ such that any configuration of $X$ is periodic of some period $p \in \periodset{}$. We assume that $\periodset{}$ does not contain two colinear periods, by taking their least common integer multiple if necessary.

  For the clarity of the argument, we assume in the following that $\periodset{}$ does not contain any period colinear to $(0,1)$. Since $\periodset{}$ is finite, the proof can be adapted for a different vector.\bigskip

  Take $p = (p^0, p^1) \in \periodset{}$, assuming $p^0 > 0$, and denote $X_p = \{x\in X : x\text{ admits }p\text{ as a period.}\}$. $X_p$ is a closed set and it is a classical argument (see for instance \cite[\textsection 2.1.2]{persft} or \cite[Lemma 5.2]{structuring}) that $(X_p, \sigma_{(0,1)})$ is topologically conjugate to a one-dimensional SFT, which we repeat here for completeness. Define:
  \[\pi_p = \left\{\begin{array}{ccc}
  \Sigma^{\ZZ^2} &\to& (\Sigma^{p^0-1})^\ZZ\\
  x&\mapsto& ((x_{i,j})_{0\leq i< p^0})_{j\in\ZZ}
  \end{array}\right.\]
  Denote $Y_p = \pi_p(X_p)$. It is not hard to see that $\pi_p$ is a one-to-one continuous function between $X_p$ and $Y_p$ and that $Y_p$ is a subshift of finite type if $X_p$ is,  since it can be defined by a finite recoding of the forbidden patterns of $X_p$. Furthermore, $\pi_p\circ \sigma_{(0,1)} = \sigma\circ \pi_p$, so it is a topological conjugacy between $(X_p, \sigma_{(0,1)})$ and $(Y_p, \sigma)$.

  For any $p_1\neq p_2 \in\periodset{}$, $X_{p_1}\cap X_{p_2}$ is a set of $2$-periodic configurations that admit non-colinear periods $p_1$ and
  $p_2$; there are a finite number of such configurations, so $|X_{p_1}\cap
  X_{p_2}|<+\infty$. In other words, $X = \bigcup_{p\in\periodset{}} X_p$ and
  the union is disjoint except for a finite set of configurations.

  Denote $Y = \sqcup_{p\in\periodset{}} Y_p$ (disjoint union). $Y$ is a subshift
  on the alphabet $\sqcup_{p\in\periodset{}} \Sigma_p$, where $\Sigma_p$ is the
  alphabet of $Y_p$. Furthermore, $Y$ is of finite type if every $Y_p$ is of
  finite type.

  Define $\varphi : Y \to X$ by $\varphi|_{Y_p} = \pi_p^{-1}$. We can check that $\varphi$ is surjective and almost everywhere bijective, and that $\varphi\circ \sigma_1 = \sigma_{(0,1)}\circ \varphi$. We have proved that $(Y, \sigma_1)$ is almost topologically conjugate to $(X, \sigma_{(0,1)})$.
\end{proof}

\subsection{Various properties of subshifts with no aperiodic points}
Theorem~\ref{thm:conjugacy} implies that the property of having no aperiodic point gives a very strong structure to a subshift. This is particularly the case for subshifts of finite type, where many problems that are indecidable in dimension 2 are completely solved in dimension 1, and these solutions carry through almost topological conjugacy.

In this section, we make use of notations that were defined in the proof of Theorem~\ref{thm:conjugacy}: $X_p, Y_p, \pi_p$ and $\varphi$.\bigskip

\subsubsection{Decision problems} Decision problems have been a staple of the theory of multidimensional subshifts of finite type: the seminal paper of Wang proved that the emptiness problem (given a list of forbidden patterns $\F$, is $X_\F = \emptyset$?) is decidable for two-dimensional non-aperiodic subshifts of finite type, but Berger later proved that the problem was undecidable without this assumption \cite{BergerPhD}. We consider other classical decision problems: the extension problem, which is undecidable for multidimensional subshifts of finite type (as a consequence of the above), and the injectivity and surjectivity problems, which are undecidable even on the two-dimensional full shift \cite{KariRevCA}. 

A $\ZZ^d$-\emph{cellular automaton} is a continuous function $F : \Sigma^{\ZZ^d} \to \Sigma^{\ZZ^d}$ that commutes with every shift function. It can be defined equivalently by a local rule $f : \Sigma^\Gamma\to\Sigma$ for a finite shape $\Gamma$ by $F(x)_v = f(\sigma_v(x)_{\Gamma})$ for all $v\in\ZZ^d$.

\begin{corollary}\label{cor:decision}
  The following problems are decidable for two-dimensional subshifts of finite type with no aperiodic point:
  \begin{description}
  \item[Extension problem] given a list of forbidden patterns $\F$ and a pattern $w$, do we have $w\in\langu{X_\F}$?
  \item[Injectivity / surjectivity problem] given a list of forbidden patterns $\F$ and a cellular automaton $\Phi : \Sigma^{\ZZ^2} \to \Sigma^{\ZZ^2}$, is $\Phi|_{X_\F}$ injective? surjective on $X_\F$?
  \end{description}
\end{corollary}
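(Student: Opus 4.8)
The plan is to reuse the explicit decomposition built in the proof of Theorem~\ref{thm:conjugacy}: when $X=\subshift\forbid$ has no aperiodic point, $X=\bigcup_{p\in\periodset{}}X_p$ for a finite set of periods $\periodset{}$, and each $X_p$ is topologically conjugate, via the sliding block code $\pi_p$, to an explicit one-dimensional SFT $Y_p$. All three problems are known to be decidable in dimension one --- the language of a one-dimensional SFT is decidable, injectivity and surjectivity of a sliding block code between one-dimensional SFTs are decidable, and inclusion of one-dimensional sofic shifts is decidable --- so the strategy is to transport each question along the conjugacies $\pi_p$.

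First I would make the decomposition effective, since Corollary~\ref{cor:main} only asserts that $\periodset{}$ exists. Running the semi-algorithm from the proof of Corollary~\ref{thm:pi1aper} --- for $n=1,2,\dots$, perform the finite test of that proof on the ball of radius $g(P_n)$ --- Theorem~\ref{thm:main} guarantees that, \emph{because} $X$ has no aperiodic point (the empty case included), this test fails at some finite stage $n_0$, and failure at stage $n_0$ certifies that every configuration of $X$ has a period of norm at most $n_0$. I then take $\periodset{}=(B(0,n_0)\setminus\{0\})'$: it is finite and computable, and since a nonzero multiple of a period is a period we get $X=\bigcup_{p\in\periodset{}}X_p$. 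After picking a shift direction not colinear to any of these finitely many periods, I build the $Y_p$ and $\pi_p$ as in the proof of Theorem~\ref{thm:conjugacy}.

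For the \textbf{extension problem}, $w\in\langu{X}$ if and only if $w\in\langu{X_p}$ for some $p\in\periodset{}$ (every configuration of $X$ is periodic), and $w\in\langu{X_p}$ becomes, under $\pi_p$, the appearance of one of finitely many patterns in the one-dimensional SFT $Y_p$, which is decidable. (One can avoid computing $\periodset{}$ altogether here: ``$w\notin\langu{\subshift\forbid}$'' is recursively enumerable by a compactness argument, while ``$w\in\langu{\subshift\forbid}$'' is recursively enumerable under the hypothesis because it is equivalent to ``$w$ occurs in some $p$-periodic configuration of $\subshift\forbid$ for some period $p$'', which is decidable for each fixed $p$; so the problem is recursive.) For the \textbf{injectivity problem}, since $\Phi$ commutes with the shifts it sends each $X_p$ into the set of $p$-periodic configurations, hence induces through $\pi_p$ a sliding block code $\Phi_p=\pi_p\circ\Phi\circ\pi_p^{-1}$ on $Y_p$; then $\Phi|_X$ is injective if and only if every $\Phi_p$ is injective on $Y_p$ and, for each of the finitely many configurations $z$ that are periodic for two non-colinear periods of $\periodset{}$, the set $\Phi^{-1}(z)\cap X=\bigcup_{q\in\periodset{}}\{x\in X_q:\Phi(x)=z\}$ --- which is finite and computable once all $\Phi_q$ are known to be injective --- has at most one element. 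For the \textbf{surjectivity problem}, I would first decide $\Phi(X)\subseteq X$ (equivalently $Z_p:=\Phi_p(Y_p)\subseteq Y_p$ for every $p$, an inclusion of one-dimensional sofic shifts) and then decide $\Phi(X)=X$ by comparing, for each $p$, the sofic shift $Z_p$ with $Y_p$, with a finite correction accounting for the doubly-periodic configurations that belong to several of the $X_p$.

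The main obstacle is the same awkward point in each reduction: the union $X=\bigcup_{p\in\periodset{}}X_p$ is \emph{not} disjoint --- its overlaps are exactly the finitely many doubly-periodic configurations of $X$ (finite by the argument in the proof of Theorem~\ref{thm:conjugacy}) --- and these configurations, being possibly isolated points, must be dealt with by an explicit finite case analysis that does not fit cleanly into the one-dimensional topological statements (for instance ``$Y_p\setminus F\subseteq Z_p$ with $F$ finite'' is not literally ``$Y_p\subseteq Z_p$''). A secondary point is that the whole procedure is only a promise algorithm: its termination relies on Theorem~\ref{thm:main}, which is precisely what turns the existence of $\periodset{}$ into an effective construction.
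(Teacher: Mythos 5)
Your proposal is correct and follows essentially the paper's own route: decompose $X_\F=\bigcup_{p\in\periodset{}}X_p$ (Corollary~\ref{cor:main}), transport each problem through the conjugacies $\pi_p$ of Theorem~\ref{thm:conjugacy} to one-dimensional SFTs, and handle the finitely many doubly periodic configurations by a separate finite check; your uniform computation of $\periodset{}$ via the algorithm of Corollary~\ref{thm:pi1aper} and your extra injectivity condition are harmless additions (the latter is in fact redundant, since a collision between $X_p$ and $X_q$ forces both preimages into $X_p\cap X_q$, hence into a single $X_p$). The one point to watch is surjectivity: the paper decides the inclusion $X_\F\subseteq\Phi(X_\F)$ (its conditions 1--2), whereas your preliminary step ``decide $\Phi(X)\subseteq X$'' commits you to the predicate $\Phi(X)=X$, which is not obviously the same question when $X_\F$ is not $\Phi$-invariant; the one-sided comparison of $Y_p$ with $\Phi_p(Y_p)$ up to the finite overlap set, which you also describe, is the check actually needed.
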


Links between periodic points and the above problems have already been considered in \cite{ks, Fiorenzi}. 

\begin{proof}[Proof sketches.]
  \begin{description}
    \item [Extension problem] Assume $w$ has shape $[-n,n]^2$. By Theorem~\ref{thm:conjugacy} we have $X = \varphi(Y)$ where $\varphi$ is continuous on a compact space, hence uniformly continuous. In other words, for every $n$, there exists $r$ such that the value of $\varphi(y)_{[-n,n]^2}$ only depends on $y_{[-r,r]}$. Since the extension problem is decidable on one-dimensional subshifts of finite type, enumerate all words $v\in \mathcal L(Y)$ and check whether $w=\varphi(v)$ for some $v$.

    \item[Injectivity problem] By Corollary~\ref{cor:main}, there is a finite set of periods $\periodset{}$
such that $X_{\mathcal F} = \bigcup_{p\in\periodset{}}X_p$.
 
    A cellular automaton is injective if and only if it is reversible. Since the image of a configuration of period $p$ by a cellular automaton also has period $p$, we have $\Phi|_{X_\F}(X_p) \subset X_p$ and (in the injective case) $\Phi|_{X_\F}^{-1}(X_p) \subset X_p$. It follows that $\Phi|_{X_\F}$ is injective if, and only if, $\Phi|_{X_p}$ is injective for every period $p\in\periodset{}$.
    
    Let $\pi_p : X_p\to Y_p$ be the continuous bijection defined in the proof of Theorem~\ref{thm:conjugacy}. $\pi_p\circ \Phi|_{X_p} \circ \pi_p^{-1}$ is a CA on $Y_p$ and it shares injectivity with $\Phi|_{X_\F}$. Injectivity of CA is decidable for one-dimensional subshifts of finite type \cite{Fiorenzi}.

\item[Surjectivity problem] Surjectivity is more delicate as a point in $X_p\cap X_q$ can be the image of a point from $X_p$ or $X_q$. However, $\Phi|_{X_\F}$ is surjective if and only if:
    \begin{enumerate}
    \item $\forall p\in\periodset{},\ \forall x \in X_{p} \backslash \bigcup_{p'\neq p} X_{p'},\ x\in\Phi|_{X_\F}(X_{p})$;
    \item $\forall p \neq p'\in\periodset{},\ \forall x \in X_{p}\cap X_{p'},\ \exists p''\in\periodset{},\ x\in\Phi|_{X_\F}(X_{p''})$, 
    \end{enumerate}
   Denote $X^\cap$ the finite set of configurations in case 2 (notice that we do not necessarily have $p'' =p$ or $p'' = p'$ if $x$ admits other periods as well). 

As in the previous case, we translate these properties on $Y_p$ and $\Phi_p = \pi_p\circ \Phi|_{X_p} \circ \pi_p^{-1}$ to work on $\ZZ$-SFT. 
Following \cite{Fiorenzi}, we can describe $\Phi_p(Y_p)$ by a finite automaton. 

For case 1, add to the finite automaton describing $\Phi_p(Y_p)$ an independent cycle for each element of $X^\cap$ and determine whether the resulting automaton describe the same SFT as the automaton describing $Y_p$. This algorithm is explained in \cite{LindMarcus}, Section~3.4.    

For case 2, since $x\in X^\cap$ is $2$-periodic, $\pi_p(x)$ is periodic, and it is easy to check by hand whether some $\Phi_p(Y_p)$ accepts $x$. Do this for all $x\in X^\cap$.
      \end{description}
  \end{proof}

  \begin{remark}
    If we did not know that $X$ admits a finite set of periods, the first proof would still show that the extension problem is in $\Sigma_1^0$ (RE). Since it is easy to show that it is in $\pizu$ (co-RE), our main result is technically unnecessary here.
  \end{remark}
    
\subsubsection{Topological entropy} 
Topological entropy is a widely-used parameter in information theory
(channel capacity) and dynamical systems theory (conjugacy invariant). Entropy
dimension is a more refined notion for systems of entropy zero, introduced in
\cite{Carvalho} and mainly used for multidimensional subshifts
\cite{Meyerovitch}.

\begin{corollary}\label{cor:entropy}
  Any two-dimensional subshift $X$ with no aperiodic point has zero topological
  entropy. Its entropy dimension is at most one.
\end{corollary}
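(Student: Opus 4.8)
The plan is to reduce the statement to the one-dimensional picture through Corollary~\ref{cor:main} and a crude count of globally admissible patterns. First I would recall the relevant definitions: for a $\ZZ^2$-subshift $Z$, writing $N_Z(n) = \abs{\langu Z \cap \Sigma^{[0,n-1]^2}}$ for the number of admissible $n\times n$ patterns, its topological entropy is $\lim_{n\to\infty} n^{-2}\log N_Z(n)$ and its entropy dimension is $\limsup_{n\to\infty}\frac{\log\log N_Z(n)}{\log n}$ (with $\log\log t := 0$ for $t\le e$), as in \cite{Carvalho,Meyerovitch}. Both quantities are monotone under inclusion of subshifts, and if $Z=\bigcup_{i\in I}Z_i$ is a finite union then $N_Z(n)\le\abs I\,\max_{i} N_{Z_i}(n)$; since the correction $\log\abs I$ is negligible both after dividing by $n^2$ and after taking a further logarithm, the entropy and the entropy dimension of $Z$ are each the maximum of the corresponding quantity over the $Z_i$. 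By Corollary~\ref{cor:main} we may write $X=\bigcup_{p\in\periodset{}}X_p$ for a finite set of periods $\periodset{}$, where $X_p=\{x\in X:\sigma_p(x)=x\}$ (the notation from the proof of Theorem~\ref{thm:conjugacy}). Hence it is enough to show that each $X_p$ has zero entropy and entropy dimension at most one.

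Next I would bound $N_{X_p}(n)$ for a fixed $p=(p^0,p^1)\in\periodset{}$, say with $p^0>0$ (the case $p^0=0$ is symmetric). A configuration $x\in X_p$ is completely determined by its restriction to the width-$p^0$ strip $[0,p^0-1]\times\ZZ$, because $x_{(i,j)}=x_{(i\bmod p^0,\,j-p^1\lfloor i/p^0\rfloor)}$; therefore the restriction of $x$ to a window $[0,n-1]^2$ is already determined by the restriction of $x$ to a shape $[0,p^0-1]\times[a,b]$ with $b-a\le (p^0+\abs{p^1})\,n$, and each such restriction is a pattern occurring in a point of $X$. This gives $N_{X_p}(n)\le\abs\Sigma^{\,C_p n}$ for a constant $C_p$ depending only on $p$. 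Consequently $n^{-2}\log N_{X_p}(n)\to 0$, so $h(X_p)=0$, and $\frac{\log\log N_{X_p}(n)}{\log n}\le\frac{\log(C_p n\log\abs\Sigma)}{\log n}\to 1$, so the entropy dimension of $X_p$ is at most one. Taking the maximum over the finitely many $p\in\periodset{}$ yields $h(X)=0$ and entropy dimension of $X$ at most one.

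The argument is essentially routine once Corollary~\ref{cor:main} is in hand; the only place that calls for a little care is the combinatorial estimate bounding an $n\times n$ window of a $p$-periodic configuration by a pattern on a strip of width $p^0$ and height $O(n)$ — and, depending on how pedantic one wishes to be, the check that entropy and entropy dimension of a finite union are the maxima of the respective quantities. One could instead read off $h(X)=0$ from Theorem~\ref{thm:conjugacy}, since a $\ZZ^2$-subshift on which some shift $\sigma_v$ acts with finite topological entropy necessarily has zero $\ZZ^2$-entropy, but the direct count above is self-contained and also produces the entropy-dimension bound with no extra work.
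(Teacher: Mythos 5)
Your argument is correct and follows essentially the same route as the paper: decompose $X=\bigcup_{p\in\periodset{}}X_p$ via Corollary~\ref{cor:main}, then observe that an $n\times n$ window of a $p$-periodic configuration is determined by $O(n)$ cells (the paper uses the $p^0$ bottom rows and $p^1$ left columns of the window, you use a width-$p^0$ strip of height $O(n)$ — the same count), giving at most $\abs{\Sigma}^{C_p n}$ patterns and hence both conclusions. Your explicit treatment of the finite-union step and of the entropy-dimension definition only fills in details the paper leaves implicit.
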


\begin{proof}[Proof sketch.]
By Corollary~\ref{cor:main}, there is a finite set of periods $\periodset{}$
such that $\langu{X} = \bigcup_{p\in\periodset{}}\langu{X_p}$.
Consider a pattern $w$ of shape $[0,n-1]^2$ in $\langu{X_p}$, assuming for clarity that $p=(p^0, p^1)$ with $p^0\geq 0$ and $p^1\geq 0$. Since $w$ cannot contain an avoidance for $p$, it is entirely determined by its $p^0$ bottommost rows and $p^1$ leftmost columns. Therefore there are at most $(p^0+p^1)n$ such patterns. A similar argument applies when $p^0 < 0$ or $p^1 < 0$.

It follows that there are at most $\sum_p(|p^0|+|p^1|)n$ patterns of shape
$[0,n-1]^2$ in $\langu{X}$, proving the statement.
\end{proof}

\subsubsection{Density of periodic points}
Density of periodic points is a typical question in dynamical systems, for example when studying chaos in the sense of Devaney. See \cite{Fiorenzi} for more details, including a proof that two-dimensional subshifts of finite type do not have dense $2$-periodic points in general, even under an additional irreducibility hypothesis.

$X$ is \emph{irreducible} (or \emph{transitive}) if for any two patterns $\gamma_1,\ \gamma_2 \in \langu{X}$ of shapes $\Gamma_1$ and $\Gamma_2$ respectively, there exists $x\in X$ and two coordinates $v_1, v_2$ such that $\sigma_{v_1}(x)_{\Gamma_1} = \gamma_1$ and $\sigma_{v_2}(x)_{\Gamma_2} = \gamma_2$.

\begin{corollary}\label{cor:periodic}
Any irreducible two-dimensional subshift of finite type $X$ with no aperiodic point has dense $2$-periodic points.
\end{corollary}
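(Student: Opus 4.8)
The plan is to collapse $X$ to a subshift with a single period, transfer the question to a one-dimensional SFT via the recoding of Theorem~\ref{thm:conjugacy}, and finish with the classical fact that irreducible $\ZZ$-SFTs have dense periodic points.

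First I would apply Corollary~\ref{cor:main} to write $X=\bigcup_{p\in\periodset{}}X_p$ for a finite set $\periodset{}$ of periods, which we may take pairwise non-colinear. I claim that irreducibility forces $X=X_p$ for a single $p$. If not, then for each $p\in\periodset{}$ the subshift $X_p$ is strictly contained in $X$, so $\langu{X_p}\neq\langu{X}$ (a subshift is determined by its language), and we may pick $\gamma_p\in\langu{X}\setminus\langu{X_p}$. The irreducibility hypothesis lets us superimpose two patterns of $X$ in a common configuration; extracting from such a configuration a single pattern that contains both, and iterating, we fit all the finitely many $\gamma_p$ into one configuration $x\in X$. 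But $x\in X_{p_0}$ for some $p_0$, so every pattern of $x$ lies in $\langu{X_{p_0}}$, contradicting that $x$ contains $\gamma_{p_0}$.

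So all configurations of $X$ share a common period $p$. Following the proof of Theorem~\ref{thm:conjugacy}, for a suitable vector $v$ not colinear to $p$ the recoding $\pi_p$ is a homeomorphism conjugating $(X,\sigma_v)$ to a one-dimensional SFT $(Y,\sigma)$, with $Y=\pi_p(X)$. A $\sigma$-periodic point of $Y$ of period $k$ pulls back, under the continuous map $\pi_p^{-1}$, to a configuration of $X$ admitting the two non-colinear periods $p$ and $kv$, hence $2$-periodic; and a pattern $w\in\langu{X}$ corresponds to a word of $\langu{Y}$, which --- once we know periodic points are dense in $Y$ --- occurs in some $\sigma$-periodic point of $Y$, whose $\pi_p^{-1}$-image is then a $2$-periodic configuration of $X$ containing $w$. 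Since irreducible $\ZZ$-SFTs have dense periodic points \cite{LindMarcus}, everything reduces to showing that $Y$ is irreducible.

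This last point is the main obstacle. What has to be shown is that the irreducibility of the $\ZZ^2$-subshift $X=X_p$ survives the passage through $\pi_p$. The difficulty is that $\pi_p$ reads off the columns of a configuration at fixed horizontal positions, whereas irreducibility only grants a configuration in which two given $X$-patterns appear at \emph{some} relative position --- and that offset need not be congruent, modulo the horizontal component of $p$, to what is needed for both patterns to be visible inside a single point of $Y$. I would address this by replacing each pattern by the pattern it determines on a full period's worth of consecutive columns and then invoking irreducibility of $X$ along the direction $v$ to connect them; care is needed here, and the step genuinely relies on the irreducibility hypothesis being strong enough (one wants the two patterns to be placeable at all sufficiently separated relative positions, not merely at one of them). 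With such a recoding argument in hand, $Y$ is an irreducible $\ZZ$-SFT and the preceding paragraph concludes the proof.
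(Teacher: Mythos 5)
Your proposal follows the same two-step route as the paper's proof: first use irreducibility to collapse the finite set of periods given by Corollary~\ref{cor:main} to a single period $p$, then transfer density of periodic points from the one-dimensional SFT $Y_p=\pi_p(X_p)$ back to $X$. Your first step is a correct variant of the paper's: where the paper takes $\periodset{}$ minimal and picks, for each period, a pattern containing an avoidance of it, you pick $\gamma_p\in\langu{X}\setminus\langu{X_p}$; both arguments then iterate irreducibility to place finitely many patterns in a common configuration, which then lies in no $X_p$. The pull-back of a $\sigma$-periodic point of $Y_p$ to a $2$-periodic point of $X$, and the observation that a sufficiently long word of $\pi_p(x)$ determines any prescribed pattern of $x$ via $p$-periodicity, are also exactly as in the paper.

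The step you leave open --- irreducibility of $Y_p$ --- is a genuine gap in your write-up, but you have put your finger on precisely the point that the paper's own proof sketch asserts without justification (``one-dimensional irreducible subshifts of finite type, such as $Y_p$''). The difficulty is real: with the definition of irreducibility stated in the paper (the two patterns merely coexist \emph{somewhere} in a common configuration, with no control on the relative offset), irreducibility of $X_p$ does \emph{not} pass to $Y_p$. For instance, over $\Sigma=\{a,b\}$ force all rows to be constant (so every point has period $(1,0)$) and forbid the vertical domino with $b$ below $a$: every pattern of the subshift occurs in the single configuration whose column reads $\cdots aaabbb\cdots$, so the subshift is irreducible in the stated sense, yet the associated one-dimensional SFT forbids the word $ba$, is not irreducible, and its periodic points (hence the $2$-periodic points of $X$) are not dense. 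Under the standard, stronger notion of irreducibility --- occurrences of $\gamma_1$ and $\gamma_2$ at every sufficiently separated prescribed offset --- the repair you sketch does go through: among the admissible offsets one chooses one whose horizontal component is divisible by $p^0$, which places both width-$p^0$ strip patterns inside the window read by $\pi_p$. So your instinct that the step ``relies on the irreducibility hypothesis being strong enough'' is correct; to complete the proof you must either adopt that stronger definition or supply the congruence argument explicitly, and the same caveat applies to the published sketch.
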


\begin{proof}[Proof sketch.]
By Corollary~\ref{cor:main}, consider $\periodset{}$ a finite and minimal set of periods such that any configuration of $X$ is periodic for some period $p\in\periodset{}$. If $\periodset{}$ is not a singleton, take $p_1\neq p_2\in\periodset{}$. There exists two finite patterns $\gamma_1$ and $\gamma_2$ that contain an avoidance of $p_1$ and $p_2$, respectively (otherwise, $\periodset{}$ would not be minimal). By irreducibility, there exists $x\in X$ where $\gamma_1$ and $\gamma_2$ both appear, and therefore $x$ avoids every $p\in\periodset{}$, a contradiction. Therefore $\periodset{}$ is a singleton $\{p\}$ and $X = X_p$ is conjugated to $Y_p$. One-dimensional irreducible subshifts of finite type, such as $Y_p$, have dense periodic points (\cite{Fiorenzi}, Proposition 9.1). The image by $\pi_p^{-1}$ of a periodic point in $Y_p$ is a $2$-periodic point in $X$, from which the statement follows.
\end{proof}

\subsection{The full caracterization of slopes of $\ZZ^3$-SFTs}

Intuitively, slopes of a subshift are the directions that some configuration admits as a unique direction of periodicity. More formally:

\begin{definition}
  Let $X$ be a $\ZZ^d$-subshift. $\theta\in\left( \QQ\cup{\infty}
  \right)^{d-1}$ is a \emph{slope of periodicity} of $X$ if there exists a
  configuration $x\in X$ and a vector $v\in\ZZ^d$ such that:
  \begin{itemize}
      \item $v\ZZ = \left\{ v' \mid \sigma_{v'}(x)=x \right\}$
      \item and $\theta_i=v_1/v_{i+1}$, for all $i\in\{0,\dots,d-1\}$.
  \end{itemize}
\end{definition}

The set of slopes of periodicity of a subshift is a conjugacy invariant.  A consequence of Corollary~\ref{thm:pi1aper} 
is that the sets of slopes of periodicity of $\ZZ^3$-SFTs is
a \sizd-computable set, and together with \cite{MoutotVanier} this implies the following
caracterization:

\begin{theorem}
  \sizd-computable subsets of $S\subseteq\left( \QQ\cup\{\infty\} \right)^2$
  are exactly the sets realizable as sets of slopes of 
  $\ZZ^3$-subshifts of finite type. 
\end{theorem}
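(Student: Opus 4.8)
The plan is to prove the two inclusions separately, using Corollary~\ref{thm:pi1aper} for one direction and the result of \cite{MoutotVanier} for the other. The ``easy'' inclusion --- every set of slopes of a $\ZZ^3$-SFT is \sizd-computable --- is where Theorem~\ref{thm:main} enters, while the reverse inclusion is essentially quoted from \cite{MoutotVanier} and only needs a short sanity check that the construction there lands in the right complexity class.

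\medskip

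\textbf{Slopes of a $\ZZ^3$-SFT are \sizd.}
Fix a finite set of forbidden patterns $\F$ defining a $\ZZ^3$-SFT $X$, and fix a candidate slope $\theta\in(\QQ\cup\{\infty\})^2$; write $v=v(\theta)\in\ZZ^3$ for the primitive integer vector in direction $\theta$. The statement $\theta$ is a slope of $X$'' unfolds as: there exists $x\in X$ such that $\sigma_v(x)=x$ and such that for every $w\in\ZZ^3$ not an integer multiple of $v$, $x$ avoids the period $w$. The plan is to express this as a \sizd\ predicate, i.e.\ $\exists \forall \exists$ over decidable matrices. First, observe that the set of configurations $x\in X$ with $\sigma_v(x)=x$ is naturally a $\ZZ^2$-subshift of finite type $X_v$ (one quotients $\ZZ^3$ by $v\ZZ$, obtaining a new alphabet of ``columns in direction $v$'' and a new finite SFT constraint; this is exactly the recoding used in the proof of Theorem~\ref{thm:conjugacy}, transposed to one higher dimension, and it is computable from $\F$ and $v$). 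Under this recoding, the remaining periods $w$ of $X$ transverse to $v$ correspond exactly to the periods of the $\ZZ^2$-SFT $X_v$, and $x$ has $v$ as its \emph{unique} direction of periodicity iff the corresponding point of $X_v$ is aperiodic. Hence $\theta$ is a slope of $X$ iff $X_v$ contains an aperiodic point. By Corollary~\ref{thm:pi1aper}, ``$X_v$ contains an aperiodic point'' is a \pizu\ predicate in $(\F,v)$, hence uniformly \pizu\ in $(\F,\theta)$; since it then depends on a further $\exists$ over the (already decidable, finitely presented) data of $v(\theta)$, membership of $\theta$ in the slope set is \sizd. More carefully: the slope set is $\{\theta : X_{v(\theta)}$ contains an aperiodic point$\}$, and ``contains an aperiodic point'' being \pizu\ means it is $\forall \exists$-decidable; but to match the \sizd\ definition of ``\sizd-computable subset of $(\QQ\cup\{\infty\})^2$'' one enumerates the pairs $\theta$ and asks the \pizu\ question, so the set is recursively enumerable relative to a \pizu\ oracle, i.e.\ \sizd.

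\medskip

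\textbf{Every \sizd\ set is a slope set.}
This is the content of \cite{MoutotVanier}: any \sizd\ subset of $(\QQ\cup\{\infty\})^2$ is realized as the set of slopes of periodicity of some $\ZZ^3$-SFT. I would simply invoke this result, after noting that its hypothesis matches ours (a \sizd-computable set in the sense fixed in Section~\ref{S:defs}).

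\medskip

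\textbf{Main obstacle.}
The delicate point is the first inclusion, and within it the claim that the recoding $X\mapsto X_v$ faithfully transports ``unique direction of periodicity $v$'' to ``aperiodic'' --- one must check that a point of $X_v$ that is aperiodic as a $\ZZ^2$-configuration really does lift to a $\ZZ^3$-configuration whose \emph{only} periods are the multiples of $v$ (no spurious period could appear ``in the $v$ direction'' because $v$ is taken primitive, and no transverse period because those are exactly the $\ZZ^2$-periods we ruled out). Making this bookkeeping precise, uniformly in $v$, and confirming that the resulting two-quantifier-block predicate genuinely sits at level \sizd\ of the arithmetical hierarchy (rather than accidentally higher), is the only real work; everything else is either a citation or the recoding lemma already used in Section~\ref{S:consequences}.
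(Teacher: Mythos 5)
Your overall architecture is the same as the paper's: quote \cite{MoutotVanier} for realizability, and reduce the converse to the aperiodicity test of Corollary~\ref{thm:pi1aper} via the ``quotient by a period'' recoding. However, there is a genuine gap in your first inclusion. You test only the \emph{primitive} vector $v(\theta)$ and claim that $\theta$ is a slope of $X$ if and only if $X_{v(\theta)}$ contains a point whose quotient is aperiodic. This misses witnesses: the definition of a slope only requires some configuration $x$ with stabilizer $\{v' : \sigma_{v'}(x)=x\} = v\ZZ$ for \emph{some} $v$ in direction $\theta$, and that $v$ may be a proper multiple $k\,v(\theta)$ with $k>1$. Such an $x$ has period $kv(\theta)$ but need not have period $v(\theta)$, so it does not belong to $X_{v(\theta)}$ at all, and your test returns the wrong answer. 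The paper's proof quantifies existentially over all multiples: $\theta$ is a slope iff there exists $k>0$ such that $X_{k p_\theta}$ contains an aperiodic configuration, and it is precisely this unbounded $\exists k$ in front of a \pizu{} predicate that places the slope set in \sizd.

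This is not a cosmetic omission, and your own complexity bookkeeping signals the problem: the ``further $\exists$ over the data of $v(\theta)$'' you invoke is vacuous ($v(\theta)$ is computable from $\theta$, no quantifier is consumed), so your predicate as written is actually \pizu, not merely \sizd. If slope sets were always \pizu, the theorem's exactness would be contradicted by the other inclusion, since \cite{MoutotVanier} realizes arbitrary \sizd{} sets, including ones that are not \pizu. The fix is exactly the paper's: let $p_\theta$ be the smallest integer vector in direction $\theta$, run the Corollary~\ref{thm:pi1aper} test on $X_{k p_\theta}$ for every $k>0$, and accept if any succeeds. Your ``main obstacle'' paragraph about transporting ``unique direction of periodicity'' through the quotient is correct for a fixed $k$ (an aperiodic quotient point lifts to a configuration whose stabilizer is $j p_\theta\ZZ$ for some $j$, still of direction $\theta$), but that bookkeeping must be done for each multiple, not just the primitive one.
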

\begin{proof}
We know from \cite{MoutotVanier} that one can realize any such \sizd set $S$ as
a  set of slopes of a $\ZZ^3$-subshift. Let us now show the remaining direction. 

Given a slope $\theta$ and a set of forbidden patterns $\F$ as an input, we want to check whether
there exists a configuration in $X_\F$ whose vectors of periodicity all have direction $\theta$.

Using the notations of the proof of Theorem~\ref{thm:conjugacy}, for any $p\in\ZZ^2$, the set $X_p$ of configurations of period $p$ (for some $k>0$) can be seen as a $\ZZ^2$-SFT $Y$ computable from $\F$ and $p$.

There is a smallest vector $p_\theta$ such that all vectors of direction $\theta$ are integer multiples of it. Remark that $\theta$ is a slope of periodicity of $X_\F$ is and only if $X_{kp_\theta}$ contains an aperiodic configuration for some $k>0$. By Corollary~\ref{thm:pi1aper}, checking whether $X_{kp_\theta}$ contains an aperiodic configuration for a given $k$ is \pizu-computable. 
Therefore checking whether there is a $k>0$ for which this holds is \sizd-computable.
\end{proof}
	
\section{Counterexample for dimensions $d>2$}\label{S:counter}

We build a counterexample to Theorem~\ref{thm:main} in higher dimension. Take $\Sigma = \{0,1\}$ and define $X$ as follows:
\begin{itemize}
\item All symbols $1$ must form lines of direction vector $(1,0,0)$ (horizontal)
  or $(0,0,1)$ (vertical);
\item There is at most one vertical line;
\item All horizontal lines are repeated periodically with period $(0,0,n)$,
  where $n$ is the distance of the vertical line to any horizontal line.
\end{itemize}

In particular, if there is no vertical line, then there is at most one horizontal line. To sum up, a subshift configuration can be : \begin{enumerate*}\item all zeroes, \item one horizontal line, \item one vertical line, or \item the situation depicted in Figure~\ref{fig:stretcher}.\end{enumerate*}
\begin{figure}[htbp]
  \centering
  \begin{tikzpicture}[x={(-30:1cm)},y={(210:1cm)},z={(90:1cm)},scale=.5]
    \foreach \j in {1,3,...,8}{
      \draw[canvas is yz plane at x=0,very thick] (-6,\j) -- ++(12,0);
      \draw[canvas is yx plane at z=\j,latex-latex,dashed,gray] (0,-5) -- ++(0,5) node[midway,above,gray] {$n$};
      \draw[canvas is yz plane at x=0,latex-latex,dashed,gray] (0,\j) -- ++(0,2) node[midway,right,gray] {$n$};
    }
    \draw[canvas is yz plane at x=-5,very thick] (0,-4.5) -- (0,-4.15);
    \draw[canvas is yz plane at x=-5,very thick] (0,-3.85) -- (0,-2.15);
    \draw[canvas is yz plane at x=-5,very thick] (0,-1.85) -- (0,-.15);
    \draw[canvas is yz plane at x=-5,very thick] (0,0.15) -- (0,1.85);
    \draw[canvas is yz plane at x=-5,very thick] (0,2.15) -- (0,8.5);

    \draw[canvas is xz plane at y=0, gray, ->] (-10,0) -- ++(0,1) node[at end, above, gray] {$z$};
    \draw[canvas is xz plane at y=0, gray, ->] (-10,0) -- ++(1,0) node[near end,above, gray] {$x$};
    \draw[canvas is yz plane at x=-10, gray, ->] (0,0) -- ++(1,0) node[near end,above, gray] {$y$};
  \end{tikzpicture}
  \caption{A typical configuration of $X$: a line of ones along $z$ at distance
    $n$ of an $(xy)$ plane of lines along $x$. The only other types of configurations of $X$ are the configurations containing either a single vertical line, a single horizontal line, or no line at all.}
  \label{fig:stretcher}
\end{figure}
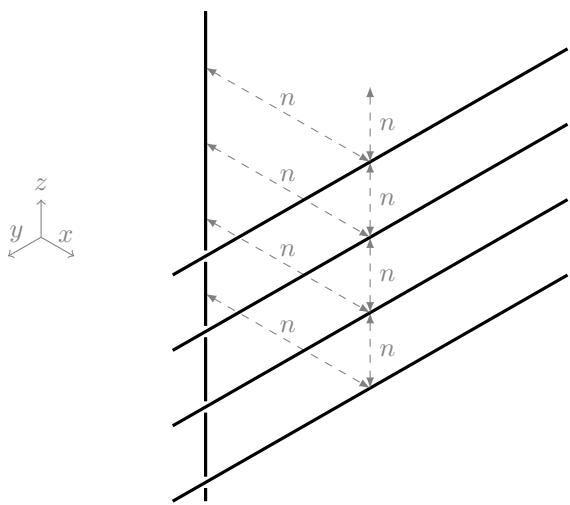

The configuration described in Figure~\ref{fig:stretcher} admits $(0,0,n)$ as period, and no shorter period. In particular, for every finite set of periods $\periodset{}$, $X$ contains a configuration that avoids $\periodset{}$ (by taking $n$ large enough). However, $\Sigma$ admits no aperiodic point\footnote{Notice that there cannot be a configuration with a single horizontal line and a single vertical line, which would be aperiodic.}. This example can easily be generalised to any $d>3$ by considering a $\ZZ^d$-subshift $X'$ that contains a copy of $X$ in at most one coordinate, and $0$ everywhere else: that is, \[x\in X' \Leftrightarrow \forall j\in \ZZ^{d-3}, (x_{i,j})_{i\in\ZZ^3} \in X\text{ and }(\forall j_1\neq j_2, (x_{i,j_1})_{i\in\ZZ^3} = 0\text{ or }(x_{i,j_2})_{i\in\ZZ^3} = 0).\]

This proves that Theorem~\ref{thm:main} does not hold in any dimension $d>2$.

\section{Open problems}

We have made clear that our main result does not hold for subshifts of dimension
$d\geq 3$. We do not know, however, whether Theorem~\ref{thm:conjugacy} or
Corollary~\ref{cor:entropy} holds in higher dimension, since the counterexample
introduced in Section~\ref{S:counter} does not contradict these results.

This counterexample is a subshift containing points with arbitrarily large periods but no aperiodic point. We do not know whether such a counterexample with infinitely many directions of periodicity exist. Moreover, the structure of $d$-dimensional subshifts of finite type for $d\geq 3$ remains open; the existence of this counterexample suggests that a making use of the finite type hypothesis is
necessary in higher dimension.
\bibliographystyle{plainurl}
\bibliography{biblio}
%\printbibliography
\end{document}